\pgfplotsset{compat=newest}
\newcommand{\rmd}{\mathrm{d}}
\newcommand{\bbE}{\mathbb{E}}\newcommand{\rme}{\mathrm{e}}
\newcommand{\bbP}{\mathbb{P}}
\newcommand{\bbR}{\mathbb{R}}
\newcommand{\sfC}{\mathsf{C}}
\newcommand{\bfD}{\mathbf{D}}\newcommand{\sfD}{\mathsf{D}}
\newcommand{\bfH}{\mathbf{H}}\newcommand{\sfH}{\mathsf{H}}
\newcommand{\sfI}{\mathsf{I}}
\newcommand{\bfQ}{\mathbf{Q}}
\newcommand{\bft}{\mathbf{t}}\newcommand{\sfT}{\mathsf{T}}
\newcommand{\bfU}{\mathbf{U}}
\newcommand{\bfV}{\mathbf{V}}
\newcommand{\bfX}{\mathbf{X}}\newcommand{\bfx}{\mathbf{x}}
\newcommand{\bfY}{\mathbf{Y}}\newcommand{\bfy}{\mathbf{y}}
\newcommand{\bfZ}{\mathbf{Z}}\newcommand{\bfz}{\mathbf{z}}
\newcommand{\cA}{\mathcal{A}}
\newcommand{\cE}{\mathcal{E}}
\newtheoremstyle{mystyle}
{}
{}
{\itshape}
{}
{\bfseries}
{}
{.5em}
{}
\newtheoremstyle{remark}
{}
{}
{}
{}
{\itshape}
{}
{.5em}
{}
\def\thmhead@plain#1#2#3{%
  \thmname{#1}\thmnumber{\@ifnotempty{#1}{ }\@upn{#2.}}%
  \thmnote{ \textsf{\the\thm@notefont\textit{#3}.}}}
\let\thmhead\thmhead@plain
\theoremstyle{mystyle}
\newtheorem{theorem}{Theorem}
\theoremstyle{mystyle}
\newtheorem{lemma}{Lemma}
\theoremstyle{mystyle}
\newtheorem{prop}{Proposition}
\theoremstyle{mystyle}
\theoremstyle{mystyle}
\newtheorem{definition}{Definition}
\theoremstyle{remark}
\newtheorem{rem}{Remark}
\theoremstyle{mystyle}
\theoremstyle{mystyle}
\theoremstyle{mystyle}
\theoremstyle{discussion}
\theoremstyle{mystyle}
\theoremstyle{mystyle}
\newcommand\independent{\protect\mathpalette{\protect\independent}{\perp}}
\def\independent#1#2{\mathrel{\rlap{$#1#2$}\mkern2mu{#1#2}}}
\def\squarebox#1{\hbox to #1{\hfill\vbox to #1{\vfill}}}
\newcommand{\del}{\partial}
\newcommand{\kl}[2]{{\sfD}\left(\left.#1 \, \right\| #2 \right)}
\newcommand{\vast}{\bBigg@{4}}
\newcommand{\Vast}{\bBigg@{5}}
\newcommand{\Gigantic}{\bBigg@{8}}
\newcommand{\supp}{{\mathsf{supp}}}
\title{On $2 \times 2$ MIMO Gaussian Channels with a Small Discrete-Time Peak-Power Constraint} 
\author{
\IEEEauthorblockN{Alex Dytso$^{*}$, Luca Barletta$^{\dagger}$,  and Gerhard Kramer$^{**}$}
$^{*}$ Qualcomm Flarion Technologies, Bridgewater,  NJ 08807, USA.
Email: odytso2@gmail.com  \\
$^{\dagger}$ Politecnico di Milano, Milano, 20133, Italy. Email:  luca.barletta@polimi.it \\
$^{**}$ Technical University of Munich, Munich, 80333, Germany.
Email: gerhard.kramer@tum.de
}
\begin{document}

\maketitle

\begin{abstract}
A multi-input multi-output (MIMO) Gaussian channel with two transmit antennas and two receive antennas is studied that is subject to an input peak-power constraint. The capacity and the capacity-achieving input distribution are unknown in general. The problem is shown to be equivalent to a channel with an identity matrix but where the input lies inside and on an ellipse with principal axis length $r_p$ and minor axis length $r_m$. If $r_p \le \sqrt{2}$, then the capacity-achieving input has support on the ellipse. A sufficient condition is derived under which a two-point distribution is optimal. Finally, if $r_m < r_p \le \sqrt{2}$, then the capacity-achieving distribution is discrete.
\end{abstract}

\section{Introduction} 
The capacity of multi-input multi-output (MIMO) Gaussian noise channels with an average-power constraint is known.  However, in practice one is also interested in a peak-power constraint, and determining the corresponding capacity is challenging. The goal of this work is to advance characterizing the capacity under a peak-power constraint $\|\bfX\|\le 1$, namely
\begin{equation}
    \max_{ P_{\bfX} :  \| \bfX\| \le 1} I(\bfX; \mathsf{\bfH} \bfX+\bfZ) \label{eq:Original_cap}
\end{equation}
where $\mathsf{\bfH}\in \bbR^{2 \times 2}$ is a channel matrix known to both the receiver and the transmitter, $\bfZ \in \bbR^n $ is a standard normal vector independent of the input $\bfX \in \bbR^n$, and $\| \cdot \|$ is the Euclidean norm.

\subsection{Problem Reformulation}
We reformulate the problem as follows:
\begin{equation}
\sfC(r_p,r_m)=\max_{ P_{\bfX}: \bfX \in \cE(r_p,r_m)} I (\bfX; \bfX +\bfZ) , \,  r_p,r_m \ge 0, \label{eq:Capacity_Def}
\end{equation} 
where $\cE(r_p,r_m)$ is the set of points inside and on the two-dimensional ellipse:
\begin{equation}
\cE(r_p,r_m) = \left \{ (x_1,x_2) \in \bbR^2: \frac{x_1^2}{r_p^2} +\frac{x_2^2}{r_m^2}  \le 1 \right \} ,
\end{equation}
where we assume $r_p \ge r_m$ and refer to $r_p$ as a primary radius and $r_m$ as a minor radius. The equivalence of \eqref{eq:Original_cap} and \eqref{eq:Capacity_Def} is shown in Appendix~\ref{sec:proof_of_equivalence}. 

\label{sec:some_mysteries}
To motivate our discussion, suppose $r_m=0$, i.e., the channel is one-dimensional. The capacity-achieving input distribution is discrete with finite support~\cite{smith1971information}.  Moreover, if $r_p \le \bar{r}_p \approx 1.6$ then the optimal input distribution is uniform on $[-r_p,r_p]$ \cite{sharma2010transition}. Suppose next that $r_p=r_m$. The optimal input distribution is uniform on a circle of radius $r_p$ for all $r_p \le \tilde{r}_p \approx 2.4 $ \cite{shamai1995capacity,dytsoEstPerspective}.
One goal of this work is to explore what happens when $0 <r_m <r_p\le \tilde{r}_p $.

\subsection{Literature Review}
The single-input single-output (SISO) version of the channel was considered in \cite{smith1971information} and the capacity-achieving input distribution is discrete with finite support. The optimality of the two-point distribution was characterized in \cite{sharma2010transition}. For further insights into the structure of the capacity-achieving distribution, the reader is referred to \cite{dytso2019capacity}. Upper and lower bounds on the capacity appear in \cite{capBoundSISO_amp,rassouli2016-2,mckellips2004simple,dytso2017generalized}.

The MIMO problem seems considerably more difficult than the SISO problem. For example, the support of the capacity-achieving distribution is, in general, not a union of isolated points but a union of lower-dimensional manifolds \cite{e21020200,eisen2023capacity,chan2005capacity}. Little is known about the structure of these manifolds. The structure is known when the channel matrix is proportional to an identity matrix, in which case the support consists of concentric shells \cite{dytsoEstPerspective,shamai1995capacity,rassouli2016capacity, favano2021capacity}. Upper and lower bounds on the capacity of general MIMO channels were considered in \cite{capDuman,genMIMOcapBoundsGlob, favano2022sphere}.

\subsection{Outline and Contributions}

This paper is organized as follows. Sec.~\ref{subsec:notation} describes the notation.  Sec.~\ref{sec:Main_results} presents our main results.  Specifically, Sec.~\ref{sec:concentration_on_boundary} shows that for a sufficiently small ellipse the capacity-achieving distribution is supported on the boundary of the input space. Sec.~\ref{sec:symmetry} establishes symmetry properties and Sec.~\ref{sec:two_point} proves a sufficient condition for optimality of a two-point distribution. Sec.~\ref{sec:on_discretness} shows that, if the ellipse is small enough, then the capacity-achieving distribution is discrete.  Sec.~\ref{sec:proofs} provides proofs.  Finally, Sec.~\ref{sec:disc_conc} concludes with a discussion and future directions. 

\subsection{Notation}
\label{subsec:notation}
The probability of a measurable subset $\cA \subseteq \bbR^n$ is written as $P_{\bfX}(\cA) = \bbP[ \bfX~\in~\cA]$. The support of $P_\bfX$ is
\begin{align}
\supp(P_{\bfX})=\{ \bfx:&  \text{ $P_{\bfX}( \mathcal{D})>0$ for every open set $ \mathcal{D} \ni \bfx $ } \}. \nonumber
\end{align} 
If $\bfX$ is discrete, we write $P_\bfX(\bfx)$ 
for its probability mass function (pmf). The probability density function (pdf) of a standard Gaussian is denoted by $\phi$. The relative entropy of $P$ and $Q$ is $\kl{P}{Q}$.   For  two random vectors $\bfX$ and $\bfY $,  $\bfX \stackrel{d}{=} \bfY$ denotes equality in distribution. 






\section{Main Results} 
\label{sec:Main_results}

\subsection{Karush-Kuhn-Tucker (KKT) Conditions }

\begin{lemma} \label{lem:KKT_conditions} The optimal $P_{\bfX^\star}$ satisfies
\begin{align}
\kl{ P_{\bfY|\bfX}(\cdot | \bfx) }{  P_{\bfY^\star} } &\le \sfC(r_p,r_m), \quad \bfx \in \cE(r_p,r_m) \label{eq:original_KKT_inequality}\\
\kl{ P_{\bfY|\bfX}(\cdot | \bfx) }{  P_{\bfY^\star} }& =  \sfC(r_p,r_m), \quad \bfx \in {\rm supp}(P_{\bfX^\star} )
\end{align}
\end{lemma}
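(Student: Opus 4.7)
My plan is to derive the KKT conditions by the standard perturbation (variational) argument for the concave functional $P_{\bfX}\mapsto I(\bfX;\bfY)$ over the convex set of probability measures on the compact set $\cE(r_p,r_m)$. Concavity of $I$ in $P_{\bfX}$ (for fixed channel) and convexity of the feasible set guarantee that a maximizer $P_{\bfX^\star}$ exists and is characterized by a first-order condition along every admissible direction.

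First I would fix any $\bfx_0\in\cE(r_p,r_m)$ and form the mixture $P_\epsilon=(1-\epsilon)P_{\bfX^\star}+\epsilon\,\delta_{\bfx_0}$ for $\epsilon\in[0,1]$, which lies in the feasible set. Writing the corresponding output distribution as $P_{\bfY_\epsilon}=(1-\epsilon)P_{\bfY^\star}+\epsilon\,P_{\bfY|\bfX}(\cdot\,|\,\bfx_0)$ and using
\begin{equation}
I(P_\epsilon)=\int \kl{P_{\bfY|\bfX}(\cdot\,|\,\bfx)}{P_{\bfY_\epsilon}}\,dP_\epsilon(\bfx),
\end{equation}
a routine differentiation (legitimate because $\cE(r_p,r_m)$ is compact and the Gaussian kernel $P_{\bfY|\bfX}(\cdot\,|\,\bfx)$ has a smooth, strictly positive density in $\bfx$, so all integrals and interchanges of limits are justified by dominated convergence) yields the one-sided directional derivative
\begin{equation}
\left.\frac{d}{d\epsilon}I(P_\epsilon)\right|_{\epsilon=0^+}=\kl{P_{\bfY|\bfX}(\cdot\,|\,\bfx_0)}{P_{\bfY^\star}}-\sfC(r_p,r_m).
\end{equation}
Optimality of $P_{\bfX^\star}$ forces this derivative to be $\le 0$, which is exactly \eqref{eq:original_KKT_inequality}.

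To get equality on $\supp(P_{\bfX^\star})$, I would integrate the inequality with respect to $P_{\bfX^\star}$ itself:
\begin{equation}
\sfC(r_p,r_m)=\int \kl{P_{\bfY|\bfX}(\cdot\,|\,\bfx)}{P_{\bfY^\star}}\,dP_{\bfX^\star}(\bfx)\le \sfC(r_p,r_m).
\end{equation}
Since equality holds overall, the integrand must equal $\sfC(r_p,r_m)$ for $P_{\bfX^\star}$-almost every $\bfx$. The map $\bfx\mapsto \kl{P_{\bfY|\bfX}(\cdot\,|\,\bfx)}{P_{\bfY^\star}}$ is continuous (again because the Gaussian likelihood is continuous in $\bfx$ and $P_{\bfY^\star}$ has a strictly positive smooth density on $\bbR^2$), so equality extends from a full $P_{\bfX^\star}$-measure set to the entire topological support, yielding the second line.

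The only genuinely technical point is justifying the differentiation under the integral and the continuity claim, but both reduce to standard dominated-convergence estimates using the boundedness of $\cE(r_p,r_m)$ and the uniform-in-$\bfx$ Gaussian tail decay of $P_{\bfY|\bfX}(\cdot\,|\,\bfx)$; I would relegate these to a brief remark rather than a separate lemma.
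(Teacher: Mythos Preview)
Your proposal is correct and follows the standard variational argument for deriving KKT conditions in channel-capacity problems. The paper itself gives no detailed proof: it simply states that the result is an extension of the $r_p=r_m$ case treated in the cited reference \cite{shamai1995capacity}, so your write-up is in fact more explicit than the paper's own proof while following the same underlying approach.
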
 
\begin{proof}
    This is an extension of the case $r_p=r_m$ treated in~\cite{shamai1995capacity}.
\end{proof}

Note that the KKT conditions imply that the set ${\rm supp}(P_{\bfX^\star} )$ is the set of global maxima of the function $\bfx \mapsto  \kl{ P_{\bfY|\bfX}(\cdot | \bfx) }{  P_{\bfY^\star} }$. This fact will be used several times. 

\subsection{When is the Support Concentrated on the Ellipse?  }
\label{sec:concentration_on_boundary}
The theorem below identifies a sufficient condition so the support of the optimal input distribution lies on the ellipse. To state the theorem, we need the following definition and lemma.
 
\begin{definition} Let $f$ be a  real-valued function that is twice continuously differentiable on an open set $\mathcal{G} \subset \mathbb{R}^n$. Then $f$ is \emph{subharmonic} if $\nabla^2 f \ge 0$ on $\mathcal{G}$, where $\nabla^2$ is the Laplacian, i.e., the trace of the Hessian matrix. 
\end{definition} 

\begin{lemma}\label{lem:sub_harmonic_KL}
The function $\bfx \mapsto \kl{ P_{\bfY|\bfX}(\cdot | \bfx) }{  P_{\bfY} }$ is subharmonic provided that  $ \| \bfX\|  \le \sqrt{2}$.
\end{lemma}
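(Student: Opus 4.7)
The plan is to compute the Laplacian of $f(\bfx) := \kl{P_{\bfY|\bfX}(\cdot|\bfx)}{P_\bfY}$ explicitly by exploiting the translation structure of the Gaussian conditional law. Writing
\[
f(\bfx) = \int \phi(\bfy - \bfx)\log\phi(\bfy-\bfx)\, d\bfy - \int \phi(\bfy - \bfx)\log p_\bfY(\bfy)\,d\bfy,
\]
the first term is the differential entropy of $\cN(\bfx,\bfI)$ and is therefore a constant in $\bfx$. So only the cross-entropy term contributes to the Laplacian. Substituting $\bfu = \bfy - \bfx$ rewrites the cross-entropy as $-\,\bbE_{\bfU\sim\cN(0,\bfI)}[\log p_\bfY(\bfU + \bfx)]$, after which I can push the Laplacian inside the expectation (smoothness of $p_\bfY$, together with Gaussian-tail domination, makes differentiation under the integral routine). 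This reduces the problem to computing $(\Delta \log p_\bfY)(\bfy)$.

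The next step is the standard Gaussian-channel identity (Hatsell--Nolte / Tweedie): for $\bfY = \bfX + \bfZ$ with $\bfZ\sim\cN(0,\bfI)$ in $\bbR^2$, differentiating $p_\bfY(\bfy) = \int \phi(\bfy-\bfx)\,dP_\bfX(\bfx)$ twice and rearranging yields
\[
\nabla^2 \log p_\bfY(\bfy) = \mathrm{Cov}[\bfX \mid \bfY = \bfy] - \bfI,
\]
so taking the trace gives $\Delta \log p_\bfY(\bfy) = \mathrm{tr}\,\mathrm{Cov}[\bfX \mid \bfY=\bfy] - 2$. I would sketch the two-line derivation of this identity, since the ``$-2$'' here is what eventually pins the threshold at $\sqrt{2}$.

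Combining the two steps gives
\[
\Delta f(\bfx) = 2 - \bbE\!\left[\,\mathrm{tr}\,\mathrm{Cov}[\bfX \mid \bfY]\,\middle|\, \bfX = \bfx\right].
\]
The last step is the bound $\mathrm{tr}\,\mathrm{Cov}[\bfX\mid \bfY=\bfy] = \bbE[\|\bfX\|^2\mid\bfY=\bfy] - \|\bbE[\bfX\mid\bfY=\bfy]\|^2 \le \bbE[\|\bfX\|^2\mid\bfY=\bfy]$, which under the hypothesis $\|\bfX\|\le\sqrt{2}$ is at most $2$ almost surely. Taking the outer expectation preserves the inequality, giving $\Delta f(\bfx)\ge 0$.

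The argument has no real obstacle; the only delicate point is the interchange of Laplacian and integral, handled by noting that $\phi$ and its derivatives decay faster than any polynomial and that $|\log p_\bfY|$ grows at most quadratically (since $p_\bfY$ is bounded below by a shifted Gaussian lower envelope on any compact set, using $\supp(P_\bfX)\subseteq \cE(r_p,r_m)$). Everything else is a direct computation and an $\|\bfX\|^2 \le 2$ application, which also transparently explains the sharpness of the threshold $\sqrt{2}$.
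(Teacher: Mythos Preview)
Your proposal is correct and follows essentially the same route as the paper: both arguments reduce the Laplacian of the relative entropy to $2 - \bbE[\mathrm{tr}\,\mathsf{Var}(\bfX\mid\bfY=\bfx+\bfZ)]$ via the Hatsell--Nolte identity $\mathsf{H}_{\bfy}\log f_\bfY(\bfy)=\mathsf{Var}(\bfX\mid\bfY=\bfy)-\sfI$, and then bound the conditional-variance trace by $2$ using $\|\bfX\|\le\sqrt{2}$. The paper is slightly terser (it states the Hessian identity directly and skips the dominated-convergence justification you sketch), but the mathematical content is the same.
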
 
\begin{proof}
The Hatsel-Nolte identity \cite{hatsell1971some, dytso2022conditional} gives
\begin{equation}
\mathsf{H}_{\bfy} \log f_{\bfY}(\bfy) =  \mathsf{Var} (\bfX| \bfY=\bfy) - \sfI,
\end{equation} 
where $\mathsf{H}_{\bfy} $ is the Hessian and $\sfI$ is an identity matrix. We further have
\begin{align}
&\mathsf{H}_{\bfx} \kl{ P_{\bfY|\bfX}(\cdot | \bfx) }{ P_{\bfY^\star} } = - \mathsf{H}_{\bfx}  \bbE \left[ \log f_{\bfY}(\bfx+\bfZ) \right] \nonumber \\
&=  \sfI -   \bbE \left[  \mathsf{Var} (\bfX| \bfY=\bfx+\bfZ) \right] . \label{eq:Hessian_of_KL}
\end{align} 
Taking the trace of \eqref{eq:Hessian_of_KL}, we arrive at
\begin{align*}
 \nabla^2_{\bfx} \kl{ P_{\bfY|\bfX}(\cdot | \bfx) }{  P_{\bfY^\star} }&= 2 -   \bbE \left[  \rm{Tr} ( \mathsf{Var} (\bfX| \bfY=\bfx+\bfZ)) \right] \\
 &\ge  2 -  2 =0
\end{align*} 
where we used $\| \bfX\| \le \sqrt{2}$.
\end{proof}

\begin{theorem} \label{thm:boundary_condition} For $r_m \le r_p \le \sqrt{2}$, we have
\begin{equation}
{\rm supp}(P_{\bfX^\star} ) \subseteq  \partial \cE(r_p,r_m), 
\end{equation} 
where $\partial \cE(r_p,r_m)$ is the boundary (an ellipse) of $\cE(r_p,r_m)$, 
\end{theorem}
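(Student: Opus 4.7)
The plan is to combine the KKT optimality condition of Lemma~\ref{lem:KKT_conditions} with the maximum principle for subharmonic functions. Let $f(\bfx) := \kl{P_{\bfY|\bfX}(\cdot|\bfx)}{P_{\bfY^\star}}$. By Lemma~\ref{lem:KKT_conditions}, $\supp(P_{\bfX^\star})$ is exactly the subset of $\cE(r_p,r_m)$ on which $f$ attains its maximum value $\sfC(r_p,r_m)$. Since $\bfX^\star \in \cE(r_p,r_m)$ and the maximum of $\|\bfx\|^2$ over $\cE(r_p,r_m)$ is $r_p^2$, we have $\|\bfX^\star\| \le r_p \le \sqrt{2}$, so Lemma~\ref{lem:sub_harmonic_KL} tells us that $f$ is subharmonic on $\bbR^2$. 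I would also record that $f$ is smooth, which follows by writing $f(\bfx) = -\bbE[\log f_{\bfY^\star}(\bfx+\bfZ)] - h(\bfZ)$ and differentiating under the expectation using Gaussian decay of $\phi$.

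The next step is to invoke the strong maximum principle: a nonconstant subharmonic function on a connected open set does not attain its supremum in the interior. Applied to $\mathrm{int}\,\cE(r_p,r_m)$, this yields a dichotomy: either no interior point of $\cE(r_p,r_m)$ maximizes $f$---in which case $\supp(P_{\bfX^\star}) \subseteq \partial \cE(r_p,r_m)$ follows immediately from Lemma~\ref{lem:KKT_conditions}---or $f \equiv \sfC(r_p,r_m)$ on all of $\cE(r_p,r_m)$.

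The main obstacle is handling the degenerate second branch, in which $\nabla^2 f \equiv 0$. From the computation in the proof of Lemma~\ref{lem:sub_harmonic_KL}, this yields
\[
\bbE\bigl[\mathrm{Tr}(\mathsf{Var}(\bfX^\star|\bfY^\star=\bfx+\bfZ))\bigr] = 2, \qquad \bfx \in \cE(r_p,r_m).
\]
Since $\mathrm{Tr}(\mathsf{Var}(\bfX^\star|\bfY^\star=\bfy)) \le \bbE[\|\bfX^\star\|^2|\bfY^\star=\bfy] \le r_p^2 \le 2$, all of these inequalities must be equalities pointwise. This forces $r_p = \sqrt{2}$ together with $\|\bfX^\star\| = \sqrt{2}$ almost surely. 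But $\{\bfx \in \cE(r_p,r_m) : \|\bfx\| = \sqrt{2}\}$ is contained in $\partial \cE(r_p,r_m)$: it is the entire boundary when $r_m = \sqrt{2}$, and it reduces to the two major-axis endpoints $\{(\pm\sqrt{2},0)\}$ when $r_m < \sqrt{2}$, both of which lie on $\partial \cE(r_p,r_m)$. Hence $\supp(P_{\bfX^\star}) \subseteq \partial \cE(r_p,r_m)$ in this branch as well, completing the proof.
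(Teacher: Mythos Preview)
Your proof is correct and follows essentially the same approach as the paper: combine the KKT characterization of $\supp(P_{\bfX^\star})$ with the maximum principle for subharmonic functions, using Lemma~\ref{lem:sub_harmonic_KL}. The paper simply cites the maximum principle for \emph{non-constant} subharmonic functions and moves on, whereas you additionally treat the degenerate branch in which $f$ is constant on $\cE(r_p,r_m)$; your argument that $\nabla^2 f\equiv 0$ forces $r_p=\sqrt{2}$ and $\|\bfX^\star\|=\sqrt{2}$ a.s.\ (hence $\supp(P_{\bfX^\star})\subseteq\{\|\bfx\|=\sqrt{2}\}\cap\cE\subseteq\partial\cE$) is a clean way to close that gap.
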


\begin{proof}
Lemma~\ref{lem:sub_harmonic_KL} shows that $\kl{ P_{\bfY|\bfX}(\cdot | \bfx) }{  P_{\bfY} } $ is sub-harmonic if the support of $P_{\bfX}$ is in a sufficiently small ball. Next, note that non-constant subharmonic functions are maximized on the boundary of a closed non empty domain \cite{stein2010complex}. Since $\bfx \mapsto \kl{ P_{\bfY|\bfX}(\cdot | \bfx) }{  P_{\bfY} } $ is subharmonic  for $r_p\le \sqrt{2}$, the maximum of $\kl{ P_{\bfY|\bfX}(\cdot | \bfx) }{  P_{\bfY} } $ over $\cE(r_p,r_m)$ occurs on the boundary of  $\cE(r_p,r_m)$ for $r_m \le r_p\le \sqrt{2}$. 
\end{proof} 

\begin{rem} 
Lemma~\ref{lem:sub_harmonic_KL} generalizes to vector channels of  any dimension $n$ if $ \| \bfX\|  \le \sqrt{2}$ is replaced with $ \| \bfX\|  \le \sqrt{n}$.
\end{rem}

We now focus on $r_m \le r_p \le \sqrt{2}$, which we call the \emph{small peak-power regime}. Theorem~\ref{thm:boundary_condition} simplifies the KKT conditions.

\begin{lemma} \label{lem:new_KKT_conditions} $P_{\bfX^\star}$ is optimal if and only if
\begin{align}
\kl{ P_{\bfY|\bfX}(\cdot | \bfx) }{  P_{\bfY^\star} } &\le \sfC(r_p,r_m) \label{eq:new_kkt_inequality}, \quad \bfx \in \partial \cE(r_p,r_m)\\
\kl{ P_{\bfY|\bfX}(\cdot | \bfx) }{  P_{\bfY^\star} }& =  \sfC(r_p,r_m), \quad \bfx \in {\rm supp}(P_{\bfX^\star} ) \label{eq:new_KKT_equality}
\end{align}
\end{lemma}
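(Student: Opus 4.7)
The plan is to establish both directions by combining Lemma~\ref{lem:KKT_conditions} with the subharmonicity of Lemma~\ref{lem:sub_harmonic_KL} and the support concentration of Theorem~\ref{thm:boundary_condition}. The interesting direction is the ``if'' part: the boundary-only KKT conditions \eqref{eq:new_kkt_inequality}--\eqref{eq:new_KKT_equality} must be shown to imply the full KKT conditions of Lemma~\ref{lem:KKT_conditions}, which then certify optimality.

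For the forward direction (optimality implies the boundary KKT conditions), Lemma~\ref{lem:KKT_conditions} already furnishes the required inequality on every $\bfx \in \cE(r_p,r_m)$, which in particular restricts to $\partial \cE(r_p,r_m)$. Equation~\eqref{eq:new_KKT_equality} then follows immediately once Theorem~\ref{thm:boundary_condition} is invoked to conclude $\supp(P_{\bfX^\star}) \subseteq \partial \cE(r_p,r_m)$.

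For the reverse direction, the key step is to lift the inequality from $\partial \cE(r_p,r_m)$ to the whole ellipse. Because $r_p \le \sqrt{2}$, Lemma~\ref{lem:sub_harmonic_KL} applies (its hypothesis depends only on the candidate input satisfying $\|\bfX^\star\| \le \sqrt{2}$, not on optimality), so $g(\bfx):= \kl{P_{\bfY|\bfX}(\cdot | \bfx)}{P_{\bfY^\star}}$ is subharmonic on the interior of $\cE(r_p,r_m)$ and continuous on its closure, since the Gaussian kernel $\phi$ makes $P_{\bfY^\star}$ smooth and strictly positive. The maximum principle for subharmonic functions then yields
\begin{equation*}
\max_{\bfx \in \cE(r_p,r_m)} g(\bfx) \;=\; \max_{\bfx \in \partial \cE(r_p,r_m)} g(\bfx) \;\le\; \sfC(r_p,r_m),
\end{equation*}
which is precisely \eqref{eq:original_KKT_inequality}. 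Together with \eqref{eq:new_KKT_equality}, the full KKT conditions of Lemma~\ref{lem:KKT_conditions} hold, and the standard argument concludes optimality: for any input $P_{\bfX}$ supported in $\cE(r_p,r_m)$,
\begin{equation*}
I(\bfX;\bfY) + \kl{P_{\bfY}}{P_{\bfY^\star}} \;=\; \int \kl{P_{\bfY|\bfX}(\cdot|\bfx)}{P_{\bfY^\star}} \, dP_{\bfX}(\bfx) \;\le\; \sfC(r_p,r_m),
\end{equation*}
so $I(\bfX;\bfY) \le \sfC(r_p,r_m)$, and equality is attained by $P_{\bfX^\star}$ in view of \eqref{eq:new_KKT_equality}.

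The main obstacle is the subharmonic maximum-principle step, i.e., verifying that $g$ is regular enough for the classical maximum principle to apply. This reduces to observing that $P_{\bfY^\star}$ is a convolution of a compactly supported measure with a standard Gaussian, hence smooth and strictly positive on $\bbR^2$, so the Hatsell-Nolte identity used in Lemma~\ref{lem:sub_harmonic_KL} holds pointwise and $g$ extends continuously to $\partial\cE(r_p,r_m)$. Everything else is routine bookkeeping on top of Lemma~\ref{lem:KKT_conditions} and Theorem~\ref{thm:boundary_condition}.
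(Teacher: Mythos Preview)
Your proposal is correct and follows the same route as the paper: the paper does not write out a separate proof for this lemma, presenting it instead as an immediate consequence of Theorem~\ref{thm:boundary_condition} (``Theorem~\ref{thm:boundary_condition} simplifies the KKT conditions''), which itself rests on the subharmonicity of Lemma~\ref{lem:sub_harmonic_KL} and the maximum principle. Your argument simply makes explicit what the paper leaves implicit, in particular the sufficiency direction via the maximum principle applied to the candidate output law and the standard identity $I(\bfX;\bfY)+\kl{P_{\bfY}}{P_{\bfY^\star}}=\int \kl{P_{\bfY|\bfX}(\cdot|\bfx)}{P_{\bfY^\star}}\,dP_{\bfX}(\bfx)$.
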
 
Note that in \eqref{eq:new_kkt_inequality}, we check only on the ellipse rather than on the region enclosed by the ellipse.

\subsection{Symmetry}
\label{sec:symmetry}
We prove the following symmetry result. 
\begin{prop}  \label{prop:symmetry}
The marginals $P_{X_1^\star}$ and $P_{X_2^\star}$ are symmetric. Also, if $0<r_m \le r_p \le \sqrt{2}$, then for  $\bfx \in \cE(r_p,r_m)$, we have
\begin{align}
&P_{X_2^\star|X_1^\star=x_1}(x_2) \notag\\
&=   \frac{1}{2} \delta \left(x_2 -r_m \sqrt{1-\frac{x_1^2}{r_p^2}}  \right)+ \frac{1}{2} \delta \left(x_2 +r_m \sqrt{1-\frac{x_1^2}{r_p^2}}  \right).
\label{eq:second-property}
\end{align}
\end{prop}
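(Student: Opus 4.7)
The plan has two stages: establishing a sign-flip symmetry of the optimal input distribution, and then combining that with Theorem~\ref{thm:boundary_condition} to identify the conditional law on the ellipse.

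First, I would exploit the group $G=\{\pm 1\}^2$ acting on $\bbR^2$ by componentwise sign flips $g\cdot(x_1,x_2)=(g_1x_1,g_2x_2)$. Three ingredients are relevant: (a) the ellipse $\cE(r_p,r_m)$ is $G$-invariant, so each pushforward $g_*P_{\bfX^\star}$ remains feasible; (b) the standard Gaussian noise satisfies $\bfZ\stackrel{d}{=}g\bfZ$; and (c) hence $I(g\bfX^\star;g\bfX^\star+\bfZ)=I(g\bfX^\star;g\bfX^\star+g\bfZ)=I(\bfX^\star;\bfX^\star+\bfZ)=\sfC(r_p,r_m)$, so every $g_*P_{\bfX^\star}$ is optimal. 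The capacity-achieving input is unique in this problem: the map $P_{\bfX}\mapsto P_{\bfY}=P_{\bfX}*\phi$ is injective (the Gaussian has a nonvanishing Fourier transform), and $h(\bfY)$ is strictly concave in $P_{\bfY}$, so $I(\bfX;\bfX+\bfZ)=h(\bfY)-h(\bfZ)$ is strictly concave in $P_{\bfX}$. Uniqueness forces $g_*P_{\bfX^\star}=P_{\bfX^\star}$ for every $g\in G$, i.e., $P_{\bfX^\star}$ is $G$-invariant. Marginalizing yields the symmetry of $P_{X_1^\star}$ and $P_{X_2^\star}$ asserted in the first sentence of the proposition.

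Second, I would invoke Theorem~\ref{thm:boundary_condition}: when $r_p\le\sqrt 2$, $\supp(P_{\bfX^\star})\subseteq\partial\cE(r_p,r_m)$, so every $(x_1,x_2)$ in the support satisfies $x_1^2/r_p^2+x_2^2/r_m^2=1$. Consequently, for $P_{X_1^\star}$-almost every $x_1$, the conditional $P_{X_2^\star\mid X_1^\star=x_1}$ is supported on the two-point set $\{\pm r_m\sqrt{1-x_1^2/r_p^2}\}$. The $x_2\mapsto-x_2$ invariance established in the first part forces this conditional to be symmetric in $x_2$, so each atom carries mass exactly $\tfrac{1}{2}$, which is precisely \eqref{eq:second-property}. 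At the degenerate values $|x_1|=r_p$ the two atoms merge at $x_2=0$ with total mass $1$, consistent with the stated formula.

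The main obstacle is the uniqueness step that upgrades \emph{``some symmetrized optimum exists''} to \emph{``the optimum is itself $G$-invariant''}. A self-contained alternative, if one wishes to sidestep uniqueness, is to replace $P_{\bfX^\star}$ by the averaged distribution $\tilde P_{\bfX}=\tfrac{1}{4}\sum_{g\in G}g_*P_{\bfX^\star}$, which is feasible, $G$-invariant, and optimal by concavity of $P_{\bfX}\mapsto I(\bfX;\bfX+\bfZ)$; the remainder of the argument proceeds verbatim with $P_{\bfX^\star}$ replaced by $\tilde P_{\bfX}$.
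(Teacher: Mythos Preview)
Your proposal is correct and follows the same approach as the paper: invariance of the problem under sign flips plus uniqueness of the optimizer yields the symmetry, and Theorem~\ref{thm:boundary_condition} then pins down the conditional. Your use of the full group $\{\pm1\}^2$ (rather than only the diagonal flip $\bfx\mapsto-\bfx$ that the paper invokes explicitly) is a mild sharpening, since the componentwise reflection $(x_1,x_2)\mapsto(x_1,-x_2)$ directly yields $P_{X_2^\star\mid X_1^\star=x_1}(x_2)=P_{X_2^\star\mid X_1^\star=x_1}(-x_2)$, which is precisely what the $\tfrac12$--$\tfrac12$ split in \eqref{eq:second-property} requires.
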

\begin{IEEEproof}
Observe that 
\begin{align}
\sfC(r_p,r_m) &=I\left(\bfX^\star; \bfX^\star + \bfZ \right) \nonumber \\
&= I \left((-1) \bfX^\star; (-1)( \bfX^\star + \bfZ) \right) \nonumber \\
&= I \left( - \bfX^\star; - \bfX^\star +\bfZ \right) \label{eq:symmetry of Z}
\end{align}
where \eqref{eq:symmetry of Z} follows because $-\bfZ \stackrel{d}{=}\bfZ$.
Note that $-\bfX^\star \in \cE(r_p,r_m)$, and \eqref{eq:symmetry of Z} implies the input $-\bfX^\star$ is also capacity-achieving. However, since the capacity-achieving distribution is unique  \cite[Thm.~1]{chan2005capacity}, we have $-\bfX^\star \stackrel{d}{=} \bfX^\star$. Next, marginalizing with respect to $X_2^\star$ gives $X_1^\star \stackrel{d}{=} -X_1^\star$, i.e., $X_1^\star$ is symmetric. Similarly, the marginal of $X_2^*$ is symmetric. 

To show~\eqref{eq:second-property}, we have $\bfX^\star \in \partial \cE(r_p,r_m)$ for $r_m \le r_p \le \sqrt{2}$. Given $X_1^\star = x_1$, this implies
\begin{align}
   & P_{X_2^\star|X_1^\star=x_1}(x_2)  \notag\\
    &=  p\, \delta \left(x_2 -r_m \sqrt{1-\frac{x_1^2}{r_p^2}}  \right)+ q \,  \delta \left(x_2 +r_m \sqrt{1-\frac{x_1^2}{r_p^2}}  \right)
\end{align}
for some $0 \le p,q \le 1$ such that $p + q =1$. Finally, $p=q =\frac{1}{2}$ follows by the symmetry of $X_2^\star$.
\end{IEEEproof}

\subsection{When are Two Mass  Points Optimal?} 
\label{sec:two_point}
We characterize the regime where two points are optimal.

\begin{theorem}\label{thm:optimal_two_points} If  $r_p \le  \sqrt{ 2} $ then 
\begin{equation}
P_{\bfX^\star}( -r_p,0  )=  P_{\bfX^\star}( r_p,0  )=\frac{1}{2}
\end{equation} 
is optimal if and only if $\frac{r_m^2}{2}$ is less than
\begin{equation}
\int_{-\infty}^\infty \left(   \phi(y-r_p) - \phi(y) \right) \log \frac{1}{  \phi(y-r_p) + \phi(y+r_p)}   {\rm d} y.  \label{eq:Condition_for_optimality_two_point}
\end{equation} 
\end{theorem}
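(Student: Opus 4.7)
The plan is to verify the simplified KKT conditions of Lemma~\ref{lem:new_KKT_conditions} for the candidate $P_{\bfX^\star}((\pm r_p,0))=1/2$. The key structural observation is that the candidate output density factorizes,
\[
f_{\bfY^\star}(y_1,y_2)=\tfrac{1}{2}\bigl[\phi(y_1-r_p)+\phi(y_1+r_p)\bigr]\phi(y_2),
\]
so the KKT divergence splits as $\kl{P_{\bfY|\bfX}(\cdot|\bfx)}{P_{\bfY^\star}}=h(x_1)+x_2^2/2$, where $h(x_1):=\kl{\cN(x_1,1)}{f_{Y_1^\star}}$. The KKT value is thus $\sfC=h(r_p)$. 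Theorem~\ref{thm:boundary_condition} reduces the check to $\partial\cE(r_p,r_m)$ and Proposition~\ref{prop:symmetry} further restricts it to $x_1\in[0,r_p]$ with $x_2^2=r_m^2(1-x_1^2/r_p^2)$. The task thus reduces to showing
\[
G(x_1):=h(x_1)+\tfrac{r_m^2}{2}\bigl(1-\tfrac{x_1^2}{r_p^2}\bigr)\le h(r_p),\quad x_1\in[0,r_p].
\]

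Using $f_{Y_1^\star}(y)=\frac{1}{\sqrt{2\pi}}e^{-(y^2+r_p^2)/2}\cosh(yr_p)$, a short calculation gives $h(x_1)=\tfrac{x_1^2+r_p^2}{2}-\bbE[\log\cosh(r_p(x_1+Z))]$, from which one verifies that $h(r_p)-h(0)$ coincides with the integral in \eqref{eq:Condition_for_optimality_two_point}. Necessity of the theorem's condition is then immediate: applying the KKT inequality \eqref{eq:new_kkt_inequality} at the boundary point $(0,\pm r_m)\in\partial\cE$ yields $G(0)\le G(r_p)$, i.e.\ $r_m^2/2\le h(r_p)-h(0)$; strict inequality is obtained by combining with uniqueness of the capacity-achieving distribution.

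For sufficiency, assume $r_m^2/2<h(r_p)-h(0)$, equivalently $G(0)<G(r_p)=h(r_p)$. Differentiating,
\[
G'(x_1)=\bigl(1-\tfrac{r_m^2}{r_p^2}\bigr)x_1-T(x_1),\qquad T(x_1):=r_p\,\bbE[\tanh(r_p(x_1+Z))],
\]
so interior critical points in $(0,r_p)$ solve $T(x_1)/x_1=1-r_m^2/r_p^2$. The main technical step is the claim that $T$ is strictly concave on $(0,\infty)$:
\[
T''(x_1)=-2r_p^3\,\bbE\bigl[\mathrm{sech}^2(r_p(x_1+Z))\tanh(r_p(x_1+Z))\bigr]<0,\quad x_1>0,
\]
because the integrand is odd in its argument and, for $x_1>0$, the Gaussian $r_p(x_1+Z)$ has positive mean $r_p x_1$, so the density at $+u$ strictly dominates that at $-u$ for every $u>0$ and the expectation is strictly positive. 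Combined with $T(0)=0$, strict concavity makes $x\mapsto T(x)/x$ strictly decreasing on $(0,\infty)$, so the critical-point equation has at most one root $x_1^\star\in(0,r_p)$, and at $x_1^\star$ the sign of $G'$ passes from negative to positive, making $x_1^\star$ a \emph{local minimum} of $G$. The case $G' \le 0$ throughout $(0,r_p)$ is ruled out by the hypothesis $G(0)<G(r_p)$. Consequently $\max_{[0,r_p]}G=\max(G(0),G(r_p))=h(r_p)$, which is precisely the KKT inequality needed.

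The hardest part is establishing strict concavity of $T$ on the positive axis through the odd-integrand/shifted-Gaussian argument. The remaining steps are bookkeeping that relies only on the product form of $P_{\bfY^\star}$, the boundary concentration result (Theorem~\ref{thm:boundary_condition}), and the marginal symmetry (Proposition~\ref{prop:symmetry}).
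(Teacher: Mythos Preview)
Your proof is correct and shares the paper's overall architecture: factor $P_{\bfY^\star}$, reduce the KKT check on $\partial\cE$ to the one-variable function $G(x_1)=h(x_1)+\tfrac{r_m^2}{2}(1-x_1^2/r_p^2)$, compute $G'(x_1)=(1-r_m^2/r_p^2)\,x_1-r_p\,\bbE[\tanh(r_p(x_1+Z))]$, show the maximum of $G$ on $[0,r_p]$ is attained at an endpoint, and read off the condition $G(0)<G(r_p)$. The divergence is in how interior maxima are ruled out. The paper writes $G'(x_1)=\bbE[\eta(Z+x_1)]$ with $\eta(y)=(1-r_m^2/r_p^2)y-r_p\tanh(r_p y)$, notes that $\eta$ has at most three sign changes in the pattern $-,+,-,+$, and invokes Karlin's oscillation theorem to transfer this bound to $G'$; together with $G'(\pm\infty)=\pm\infty$ this forces any sign change of $G'$ on $(0,\infty)$ to be from negative to positive. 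You instead prove directly that $T(x_1)=r_p\,\bbE[\tanh(r_p(x_1+Z))]$ is strictly concave on $(0,\infty)$ via the odd-integrand/shifted-Gaussian device, so that $T(x)/x$ is strictly decreasing and the critical-point equation $T(x)/x=1-r_m^2/r_p^2$ has at most one root, necessarily a local minimum of $G$. Your route is more elementary and self-contained, avoiding the external variation-diminishing lemma; the paper's oscillation-theorem route is the standard machinery in the amplitude-constrained capacity literature and generalizes more readily to settings where the analogue of $T$ lacks a clean concavity property but a sign-change count on the pre-convolution kernel $\eta$ still goes through.
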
 
\begin{proof}
See Section~\ref{proof:thm:optimal_two_points}. 
\end{proof} 

Fig.~\ref{fig:Two-Four-Optimality} plots the boundary (in blue) of the region for which two mass points are optimal.

\begin{figure}
\center
%
%
\begin{tikzpicture}

\begin{axis}[%
width=6cm,
height=5cm,
at={(1.011in,0.642in)},
scale only axis,
xmin=0,
xmax=1.20112240878645,
xlabel style={font=\color{white!15!black}},
xlabel={$r_p$},
ymin=0,
ymax=1.20112240878645,
ylabel style={font=\color{white!15!black}},
ylabel={ $r_m$},
axis background/.style={fill=white},
axis x line*=bottom,
axis y line*=left,
xmajorgrids,
ymajorgrids,
legend style={legend cell align=left, align=left, draw=white!15!black}
]
\addplot [color=blue]
  table[row sep=crcr]{%
0.001	0.000999999499999258\\
0.025002448175729	0.0249946370397963\\
0.049004896351458	0.0489461597050922\\
0.073007344527187	0.0728135471640756\\
0.097009792702916	0.0965564824199653\\
0.121012240878645	0.1201356567899\\
0.145014689054374	0.143513032979662\\
0.169017137230103	0.166652059238893\\
0.193019585405832	0.189517831639667\\
0.217022033581561	0.212077205120733\\
0.24102448175729	0.234298856743487\\
0.265026929933019	0.256153306493062\\
0.289029378108748	0.277612901977613\\
0.313031826284477	0.298651773680113\\
0.337034274460206	0.319245767188818\\
0.361036722635935	0.339372358259138\\
0.385039170811664	0.359010555797007\\
0.409041618987393	0.378140797020099\\
0.433044067163122	0.39674483822998\\
0.457046515338851	0.414805643865923\\
0.48104896351458	0.432307275836116\\
0.505051411690309	0.449234784544232\\
0.529053859866038	0.465574102547482\\
0.553056308041767	0.481311941388649\\
0.577058756217496	0.496435691828183\\
0.601061204393225	0.510933327450822\\
0.625063652568954	0.524793311422338\\
0.649066100744683	0.538004506014182\\
0.673068548920412	0.550556084386893\\
0.697070997096141	0.562437444017668\\
0.72107344527187	0.573638121065408\\
0.745075893447599	0.584147704880432\\
0.769078341623328	0.593955751779174\\
0.793080789799057	0.603051697109984\\
0.817083237974786	0.611424764528303\\
0.841085686150515	0.619063871270877\\
0.865088134326244	0.625957528061864\\
0.889090582501973	0.632093732088758\\
0.913093030677702	0.637459851243949\\
0.937095478853431	0.6420424975208\\
0.96109792702916	0.645827387067813\\
0.985100375204889	0.648799183913459\\
1.00910282338062	0.650941323748465\\
1.03310527155635	0.652235813348936\\
1.05710771973208	0.652663000185682\\
1.0811101679078	0.652201305413856\\
1.10511261608353	0.650826911663126\\
1.12911506425926	0.648513394697735\\
1.15311751243499	0.645231284867787\\
1.17711996061072	0.640947540008651\\
1.20112240878645	0.63562490559503\\
};
\addlegendentry{Two-Point Boundary}

\addplot [color=red]
  table[row sep=crcr]{%
0.001	0.001\\
0.025002448175729	0.025002448175729\\
0.049004896351458	0.049004896351458\\
0.073007344527187	0.073007344527187\\
0.097009792702916	0.097009792702916\\
0.121012240878645	0.121012240878645\\
0.145014689054374	0.145014689054374\\
0.169017137230103	0.169017137230103\\
0.193019585405832	0.193019585405832\\
0.217022033581561	0.217022033581561\\
0.24102448175729	0.24102448175729\\
0.265026929933019	0.265026929933019\\
0.289029378108748	0.289029378108748\\
0.313031826284477	0.313031826284477\\
0.337034274460206	0.337034274460206\\
0.361036722635935	0.361036722635935\\
0.385039170811664	0.385039170811664\\
0.409041618987393	0.409041618987393\\
0.433044067163122	0.433044067163122\\
0.457046515338851	0.457046515338851\\
0.48104896351458	0.48104896351458\\
0.505051411690309	0.505051411690309\\
0.529053859866038	0.529053859866038\\
0.553056308041767	0.553056308041767\\
0.577058756217496	0.577058756217496\\
0.601061204393225	0.601061204393225\\
0.625063652568954	0.625063652568954\\
0.649066100744683	0.649066100744683\\
0.673068548920412	0.673068548920412\\
0.697070997096141	0.697070997096141\\
0.72107344527187	0.72107344527187\\
0.745075893447599	0.745075893447599\\
0.769078341623328	0.769078341623328\\
0.793080789799057	0.793080789799057\\
0.817083237974786	0.817083237974786\\
0.841085686150515	0.841085686150515\\
0.865088134326244	0.865088134326244\\
0.889090582501973	0.889090582501973\\
0.913093030677702	0.913093030677702\\
0.937095478853431	0.937095478853431\\
0.96109792702916	0.96109792702916\\
0.985100375204889	0.985100375204889\\
1.00910282338062	1.00910282338062\\
1.03310527155635	1.03310527155635\\
1.05710771973208	1.05710771973208\\
1.0811101679078	1.0811101679078\\
1.10511261608353	1.10511261608353\\
1.12911506425926	1.12911506425926\\
1.15311751243499	1.15311751243499\\
1.17711996061072	1.17711996061072\\
1.20112240878645	1.20112240878645\\
};
\addlegendentry{ $r_p=r_m$ (cont. uniform is optimal) }

\addplot [color=magenta, dashed]
  table[row sep=crcr]{%
0.01	0.01\\
0.024183975377506	0.024183975377506\\
0.038367950755012	0.038367950755012\\
0.052551926132518	0.052551926132518\\
0.0667359015100241	0.0666034996212567\\
0.0809198768875301	0.0807593344464105\\
0.0951038522650361	0.0947264862780926\\
0.109287827642542	0.108854180550894\\
0.123471803020048	0.12273691072552\\
0.137655778397554	0.136563359795441\\
0.15183975377506	0.150333527760658\\
0.166023729152566	0.164047414621171\\
0.180207704530072	0.177705020376979\\
0.194391679907578	0.190920678168146\\
0.208575655285084	0.204437581162194\\
0.22275963066259	0.217456255086896\\
0.236943606040096	0.230832455319182\\
0.251127581417602	0.243654145377419\\
0.265311556795108	0.256889642847945\\
0.279495532172614	0.269514349039716\\
0.29367950755012	0.282026493022079\\
0.307863482927626	0.294426074795034\\
0.322047458305132	0.306713094358579\\
0.336231433682638	0.318887551712716\\
0.350415409060144	0.330949446857444\\
0.36459938443765	0.342898779792764\\
0.378783359815156	0.354735550518675\\
0.392967335192662	0.366459759035177\\
0.407151310570168	0.377263630197052\\
0.421335285947674	0.388734573742384\\
0.43551926132518	0.400092955078308\\
0.449703236702686	0.410446577402548\\
0.463887212080192	0.421551693767303\\
0.478071187457698	0.431595770015668\\
0.492255162835204	0.442447621409253\\
0.50643913821271	0.452182151581745\\
0.520623113590216	0.462780738004161\\
0.534807088967722	0.472205722100779\\
0.548991064345228	0.481461862883284\\
0.563175039722734	0.49166648157277\\
0.57735901510024	0.500613076279401\\
0.591542990477746	0.509390827671919\\
0.605726965855252	0.519201478628474\\
0.619910941232758	0.527669683945119\\
0.634094916610265	0.53596904594765\\
0.648278891987771	0.544099564636069\\
0.662462867365277	0.553375545097932\\
0.676646842742783	0.561196517710477\\
0.690830818120289	0.568848647008909\\
0.705014793497795	0.576331932993227\\
0.719198768875301	0.5850732429604\\
0.733382744252807	0.592246982868845\\
0.747566719630313	0.599251879463176\\
0.761750695007819	0.606087932743395\\
0.775934670385325	0.614294572215877\\
0.790118645762831	0.620821079420222\\
0.804302621140337	0.627178743310454\\
0.818486596517843	0.634991415050005\\
0.832670571895349	0.641039532864363\\
0.846854547272855	0.646918807364608\\
0.861038522650361	0.65262923855074\\
0.875222498027867	0.6599072397956\\
0.889406473405373	0.665308124905859\\
0.903590448782879	0.670540166702004\\
0.917774424160385	0.677424200213934\\
0.931958399537891	0.682346695934205\\
0.946142374915397	0.688977464474965\\
0.960326350292903	0.693590414119363\\
0.974510325670409	0.698034520449648\\
0.988694301047915	0.704271321257477\\
1.00287827642542	0.708405881511888\\
1.01706225180293	0.714389417348549\\
1.03124622718043	0.718214431527086\\
1.04543020255794	0.723944702392577\\
1.05961417793544	0.72746017049524\\
1.07379815331295	0.732937176389561\\
1.08798212869046	0.738301620074474\\
1.10216610406796	0.741366839339502\\
1.11635007944547	0.746478018053246\\
1.13053405482297	0.7492336912424\\
1.14471803020048	0.754091604984974\\
1.15890200557799	0.758836956518139\\
1.17308598095549	0.761142380869659\\
1.187269956333	0.765634467431655\\
1.2014539317105	0.770013991784242\\
1.21563790708801	0.77428095392742\\
1.22982188246552	0.775995426679544\\
1.24400585784302	0.780009123851553\\
1.25818983322053	0.783910258814153\\
1.27237380859803	0.787698831567345\\
1.28655778397554	0.791374842111127\\
1.30074175935305	0.794938290445502\\
1.31492573473055	0.798389176570467\\
1.32910971010806	0.799090589437913\\
1.34329368548557	0.802288210591709\\
1.35747766086307	0.805373269536097\\
1.37166163624058	0.808345766271075\\
1.38584561161808	0.811205700796645\\
1.40002958699559	0.813953073112806\\
1.4142135623731	0.816587883219559\\
};
\addlegendentry{Average Power }

\end{axis}

\end{tikzpicture}%
\caption{Region of optimality of two points.}
\label{fig:Two-Four-Optimality}
\end{figure}
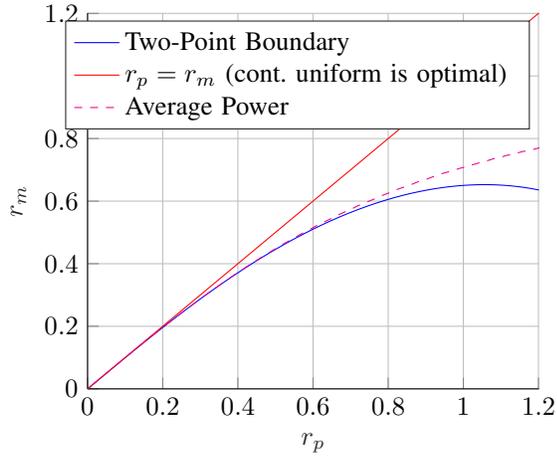

\begin{rem}
It is interesting to compare the solution in Theorem~\ref{thm:optimal_two_points} to having an average-power constraint. The capacity with an average-power constraint is
\begin{equation}
 \max_{ \bbE \left[  \| \bfX\|^2 \right] \le 1} I(\bfX; \mathsf{\bfD} \bfX+\bfZ), \label{eq:Average_Power_Capacity}
\end{equation}
where $\mathsf{\bfD}$ may be chosen as diagonal because the singular value decomposition of $\bfH$ and the isotropic pdf of $\bfZ$ permit this restriction. 
The capacity-achieving distribution is Gaussian with a diagonal covariance matrix with $\bbE[(X_1^\star)^2]=P_1$ and $\bbE[(X_2^\star)^2]=P_2$ and where $P_1$ and $P_2$ are given by the waterfilling solution
\begin{equation}
P_i=\max \left( \nu - \frac{1}{D_{ii}^2} ,0\right), \quad i \in \{1,2\},
\end{equation}
where $\nu$ is chosen so that $P_1+P_2=1$. To compare, let $D_{11}^2=r_p^2$ and $D_{22}^2=r_m^2$; We are interested in the regime where only one antenna is used (i.e., $P_2=0$).  The red dashed curve of Fig.~\ref{fig:Two-Four-Optimality} depicts the boundary of the region for which using only one antenna is optimal. Observe that the curves for the peak-power constraint and the average-power constraint at first follow each other but then they diverge. 
\end{rem}

\begin{rem}
Consider the $r_p,r_m$ for which a four-point distribution is optimal.  Unfortunately, finding a simple condition similar to \eqref{eq:Condition_for_optimality_two_point} for four points  does not seem possible. For example, such condition is not known even for $n=1$. Also, by symmetry, the optimal input distribution is
  \begin{equation}
P_{\bfX^\star}(\bfx) = \left \{ \begin{array}{ll} 
\frac{p}{2} &\bfx=(r_p,0)\\
\frac{p}{2} &\bfx=(-r_p,0)\\
\frac{1-p}{2} &\bfx=(0,r_m)\\
\frac{1-p}{2} &\bfx=(0,-r_m)\\
 \end{array}  \right.
\end{equation} 
for some $p \in [0,1]$. Because the distribution is parameterized by three parameters $(p,r_p ,r_m)$, it seems feasible to check the KKT conditions.  However, because of the multivariate nature of the problem, numerical methods suffer from instabilities, and we were unable to reliably compute $P_{\bfX^\star}$.  
\end{rem}

\subsection{On the Structure of $P_{\bfX^\star}$} 
\label{sec:on_discretness}

We next resolve the question posed in Section~\ref{sec:some_mysteries}, i.e., we show that the capacity-achieving input in the small peak-power regime is discrete with finitely many points if $r_p \neq r_m$ and is uniform on the circle with radius $r_p$ if $r_p=r_m$. 

\begin{theorem}\label{thm:discretness} Suppose $0\le r_m \le r_p<\sqrt{ 2}$. Then we have
\begin{itemize}
\item $\supp(P_{\bfX^\star})$ is a finite set for $r_m \neq  r_p$  or $r_m=r_p=0$;
\item $P_{\bfX^\star}$ is uniformly distributed on $ \partial \cE(r_p,r_p)$ for $0< r_m = r_p$, i.e., $\supp(P_{\bfX^\star})$ has infinite cardinality.    
\end{itemize} 
\end{theorem}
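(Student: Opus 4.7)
The plan is to parametrize $\partial\cE(r_p,r_m)$ by an angle $\theta$, study the KKT slack $h(\theta):=\kl{P_{\bfY|\bfX}(\cdot|\bfx(\theta))}{P_{\bfY^\star}}-\sfC(r_p,r_m)$ via the identity theorem for real-analytic functions, and handle the residual ``full-support'' alternative through symmetry and a cumulant obstruction.

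By Theorem~\ref{thm:boundary_condition}, $\supp(P_{\bfX^\star})\subseteq\partial\cE(r_p,r_m)$; parametrize this boundary by $\bfx(\theta)=(r_p\cos\theta,r_m\sin\theta)$, $\theta\in[0,2\pi)$. Lemma~\ref{lem:new_KKT_conditions} gives $h\le 0$ on $[0,2\pi)$ with zero set in one-to-one correspondence with $\supp(P_{\bfX^\star})$. Next I would prove $h$ is real-analytic: using
\[
f_{\bfY^\star}(\bfy)=\tfrac{1}{2\pi}e^{-\|\bfy\|^2/2}M(\bfy),\qquad M(\bfu):=\bbE\!\left[e^{\bfu\cdot\bfX^\star-\|\bfX^\star\|^2/2}\right],
\]
a direct computation yields $g^\star(\bfx):=\kl{P_{\bfY|\bfX}(\cdot|\bfx)}{P_{\bfY^\star}}=\tfrac{\|\bfx\|^2}{2}-\bbE[\log M(\bfx+\bfZ)]$. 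Since $\bfX^\star$ is bounded, $M$ is entire and strictly positive on $\bbR^2$, so $g^\star$ is real-analytic on $\bbR^2$, and hence $h=g^\star\circ\bfx-\sfC$ is real-analytic on $\bbR$. The identity theorem on the compact interval $[0,2\pi]$ then forces the dichotomy: either $h$ has finitely many zeros (so $\supp(P_{\bfX^\star})$ is finite, giving the first bullet), or $h\equiv 0$ (so $\supp(P_{\bfX^\star})=\partial\cE(r_p,r_m)$).

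The $h\equiv 0$ alternative splits into cases. If $r_p=r_m=0$, it is vacuous. If $r_m=0<r_p$, the channel reduces to a one-dimensional peak-limited Gaussian channel and Smith's theorem~\cite{smith1971information} produces a finite support, contradicting $h\equiv 0$. If $0<r_p=r_m$, the problem is rotation invariant, so the uniqueness of $P_{\bfX^\star}$~\cite[Thm.~1]{chan2005capacity} combined with the invariance forces $P_{\bfX^\star}$ to be the uniform distribution on $\partial\cE(r_p,r_p)$, proving the second bullet. The remaining sub-case $0<r_m<r_p$ is the main novelty. Using the symmetries from Proposition~\ref{prop:symmetry}, the function $F(\theta):=\bbE[\log M(\bfx(\theta)+\bfZ)]$ is even and $\pi$-periodic, hence admits an expansion $F(\theta)=\sum_{k\ge 0}b_k\cos(2k\theta)$. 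Since $\tfrac{\|\bfx(\theta)\|^2}{2}=\tfrac{r_p^2+r_m^2}{4}+\tfrac{r_p^2-r_m^2}{4}\cos(2\theta)$, the hypothesis $h\equiv 0$ forces $b_1=\tfrac{r_p^2-r_m^2}{4}\ne 0$ and $b_k=0$ for every $k\ge 2$. Expanding $\log M$ in powers of $\bfu$ and tracking the induced Fourier modes in $\theta$ converts the vanishing of all $b_k$ for $k\ge 2$ into an infinite system of constraints on the cumulants of the tilted measure $\tilde P\propto e^{-\|\bfx\|^2/2}P_{\bfX^\star}$; these constraints would force $\tilde P$ to share the cumulant structure of a Gaussian, contradicting its compact support on the ellipse.

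The main obstacle is this final sub-case: one must carry out the Taylor--Fourier bookkeeping that extracts the cumulant-vanishing conclusion from $b_k=0$ for $k\ge 2$, and then invoke a Marcinkiewicz-type obstruction (no compactly supported non-degenerate law can have the cumulants of a Gaussian) to close the contradiction.
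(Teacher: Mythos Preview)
Your overall framework---reduce to the boundary, establish real-analyticity of the slack function, and invoke the identity theorem to obtain the dichotomy ``finitely many zeros versus $h\equiv 0$''---is sound and matches the paper's strategy in spirit. Your treatment of the cases $r_m=r_p=0$, $r_m=0<r_p$ (via Smith), and $0<r_m=r_p$ (via rotational symmetry and uniqueness) is also correct.

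The genuine gap is your plan for the remaining sub-case $0<r_m<r_p$ with $h\equiv 0$. The proposed Fourier/Marcinkiewicz mechanism does not close: the condition $b_k=0$ for $k\ge 2$ only says that the real-analytic function $G(\bfu):=\bbE_{\bfZ}[\log M(\bfu+\bfZ)]$ agrees with an affine function of $x_1^2$ \emph{on the ellipse}. That restriction does not determine $G$ (hence $\log M$, hence the cumulants of $\tilde P$) off the ellipse; you can always modify $G$ by any real-analytic multiple of $x_1^2/r_p^2+x_2^2/r_m^2-1$ without changing any $b_k$. Consequently there is no mechanism by which $b_k=0$ forces the higher cumulants of $\tilde P$ to vanish, and Marcinkiewicz---a statement about CGFs that are \emph{global} polynomials---is not applicable to a constraint that lives only on a one-dimensional curve. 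The ``Taylor--Fourier bookkeeping'' you flag as an obstacle is not merely tedious; the underlying map from cumulants to Fourier modes of $F(\theta)$ is genuinely many-to-one.

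The paper resolves this sub-case by a different device that your angular parametrization forecloses. It parametrizes the (upper) boundary by $x_1\in(-r_p,r_p)$ so that the KKT identity reads
\[
0=\frac{x_1^2}{2}\Bigl(1-\frac{r_m^2}{r_p^2}\Bigr)-f(x_1;P_{\bfX^\star})-a,
\]
then extends $f$ holomorphically to $\mathbb{C}\setminus\{z:|\mathrm{Re}(z)|\ge r_p,\ \mathrm{Im}(z)=0\}$ (using the principal square root in the $x_2$-coordinate). On the imaginary axis $z=iu$ the polynomial part becomes $-\tfrac{u^2}{2}(1-r_m^2/r_p^2)$, which is genuinely quadratic when $r_m<r_p$, while an explicit bound shows $|\tilde f(iu)|=O(|u|)$. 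The growth mismatch yields the contradiction directly. The key structural point is that extending to a \emph{non-compact} complex domain gives the leverage needed to separate $r_m<r_p$ from $r_m=r_p$; your periodic parametrization confines you to a compact circle where no such growth argument is available.
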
 
\begin{proof}
See Section~\ref{proof:thm:discretness}. 
\end{proof} 

The next theorem shows that the points $(\pm r_p,0)$ are part of the optimal input distribution.

\begin{theorem}\label{thm:majoraxisoptimality}
	If $0\le r_m \le r_p$ then we have
	\begin{equation}
	\{( -r_p,0  ),(r_p,0)\} \subseteq \supp( P_{\bfX^\star}).
	\end{equation} 
\end{theorem}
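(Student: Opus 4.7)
The plan is to show that $(r_p,0)$ lies in the global argmax over $\cE(r_p,r_m)$ of
\[
g(\bfx) := \kl{P_{\bfY|\bfX}(\cdot|\bfx)}{P_{\bfY^\star}},
\]
which by the remark after Lemma~\ref{lem:KKT_conditions} coincides with $\supp(P_{\bfX^\star})$. First I would strengthen Proposition~\ref{prop:symmetry} by rerunning the same uniqueness-of-capacity-achiever argument for each coordinate reflection separately, using that $\bfZ$ and $\cE(r_p,r_m)$ are both invariant under $(X_1,X_2)\mapsto(-X_1,X_2)$ and $(X_1,X_2)\mapsto(X_1,-X_2)$. This yields the full axis symmetry $\bfX^\star \stackrel{d}{=}(-X_1^\star,X_2^\star)\stackrel{d}{=}(X_1^\star,-X_2^\star)$, so $f_{\bfY^\star}$ and $g$ are even in each coordinate.

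Next, using Tweedie's formula for Gaussian noise, $\nabla \log f_{\bfY^\star}(\bfy) = \bbE[\bfX^\star \mid \bfY^\star = \bfy] - \bfy$, I would compute
\[
\nabla g(\bfx) = \bfx - \bbE_{\bfZ}\!\bigl[\bbE[\bfX^\star \mid \bfY^\star = \bfx+\bfZ]\bigr].
\]
At $\bfx=(r_p,0)$ the second component vanishes, since $\bbE[X_2^\star \mid \bfY^\star = (y_1,y_2)]$ is odd in $y_2$ and $Z_2$ is symmetric. The first component equals $r_p - v_1$ with $v_1 := \bbE_\bfZ[\bbE[X_1^\star \mid \bfY^\star = (r_p+Z_1, Z_2)]]$. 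The ellipse constraint gives $|X_1^\star|\le r_p$, hence $v_1 \le r_p$; equality would force $X_1^\star \equiv r_p$ almost surely for a.e.\ $\bfZ$ in the specified conditioning, collapsing $\supp(P_{\bfX^\star})$ to $\{(r_p,0)\}$ and violating $\bfX^\star \stackrel{d}{=} -\bfX^\star$ whenever $\sfC(r_p,r_m)>0$. Thus $\nabla g(r_p,0)$ is strictly aligned with the outward normal to $\partial \cE(r_p,r_m)$ at $(r_p,0)$, making $(r_p,0)$ a first-order KKT stationary point.

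Finally, to upgrade this to a global maximum, I would study the one-dimensional slice $h(x_1):=g(x_1,0)$ on $[-r_p,r_p]$, which is even by symmetry and satisfies $h'(r_p) = r_p - v_1 > 0$. Using the Hatsell--Nolte identity from the proof of Lemma~\ref{lem:sub_harmonic_KL},
\[
h''(x_1) = 1 - \bbE_\bfZ\!\bigl[\Var(X_1^\star \mid \bfY^\star = (x_1+Z_1,Z_2))\bigr].
\]
If $h$ is convex, then evenness together with $h'(\pm r_p)\neq 0$ forces its maximum on $[-r_p,r_p]$ to be attained at $\pm r_p$; combined with the KKT bound $g\le \sfC(r_p,r_m)$ and the identification $\supp(P_{\bfX^\star})=\arg\max_{\bfx\in\cE}g$, this places $(r_p,0)\in\supp(P_{\bfX^\star})$, and $(-r_p,0)$ then follows from $\bfX^\star\stackrel{d}{=}-\bfX^\star$. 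The main obstacle is controlling the conditional-variance integral appearing in $h''$: the bound $\bbE_\bfZ[\Var(X_1^\star\mid \bfY^\star=(x_1+Z_1,Z_2))]\le 1$ is immediate when $r_p\le 1$ (via $|X_1^\star|\le r_p$), but in the full range of the theorem it likely requires finer estimates, potentially via a direct comparison $g(r_p,0)\ge g(\bfx)$ along $\partial\cE(r_p,r_m)$ exploiting that $(r_p,0)$ is the farthest ellipse point from the (symmetrically distributed) support of $P_{\bfX^\star}$.
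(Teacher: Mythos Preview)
Your gradient computation via Tweedie is correct, and so is the observation that $\partial_{x_1}g(r_p,0)=r_p-v_1>0$. The argument breaks down, however, at the ``upgrade to global maximum'' step. Even granting convexity of $h(x_1)=g(x_1,0)$ (which you only secure for $r_p\le 1$), all you would obtain is that $(r_p,0)$ maximizes $g$ \emph{along the segment} $\{x_2=0\}\cap\cE$. The identification $\supp(P_{\bfX^\star})=\arg\max_{\bfx\in\cE}g(\bfx)$ requires $g(r_p,0)\ge g(\bfx)$ for \emph{every} $\bfx\in\cE$, including points off the $x_1$-axis; nothing in your slice analysis compares $g(r_p,0)$ with, say, $g(0,r_m)$ or a generic ellipse point. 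The sentence ``combined with the KKT bound \ldots this places $(r_p,0)\in\supp(P_{\bfX^\star})$'' is therefore a non sequitur, and the closing suggestion of a ``direct comparison along $\partial\cE$'' is not substantiated by any mechanism.

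The paper sidesteps this entirely by arguing \emph{by contradiction}: assume $t:=\max\supp(P_{X_1^\star})<r_p$, so that $X_1^\star\le t$ almost surely. This hypothesis---which you never invoke---yields the uniform bound $\bbE[X_1^\star\mid\bfY^\star=\bfy]\le t$ for all $\bfy$, and then your own Tweedie formula gives
\[
\partial_{x_1}g(x_1,x_2)=x_1-\bbE_{\bfZ}\!\bigl[\bbE[X_1^\star\mid\bfY^\star=(x_1,x_2)+\bfZ]\bigr]\ge x_1-t>0
\]
for every $x_1>t$ and \emph{every} $x_2\in\bbR$. Hence $g$ is strictly increasing in $x_1$ throughout the half-plane $\{x_1>t\}$, so picking a support point $(t,x_2^*)$ (where $g=\sfC(r_p,r_m)$) and moving right produces a KKT violation. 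The essential idea you are missing is that the contradiction hypothesis converts the conditional-mean term into something you can bound \emph{uniformly in $x_2$}, which is precisely what makes a two-dimensional conclusion possible without any convexity or second-order control.
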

\begin{proof}
	See Section~\ref{proof:thm:majoraxisoptimality}.
\end{proof}

\section{Proofs} 
\label{sec:proofs}
\subsection{Proof of  Theorem~\ref{thm:optimal_two_points}}
\label{proof:thm:optimal_two_points}

To check the optimality of $P_{\bfX^\star}$, we check the KKT conditions in Lemma~\ref{lem:KKT_conditions}. We have
\begin{equation}
    P_{\bfY^\star}(y_1,y_2) = P_{Y_1^\star}(y_1)P_{Y_2|X_2}(y_2|0)
\end{equation}
so the KL divergence simplifies to 
\begin{align}
&\kl{ P_{\bfY|\bfX}(\cdot | \bfx) }{  P_{\bfY^\star} } \nonumber \\&=\kl{ P_{Y_1|X_1}(\cdot | x_1) }{ P_{Y^\star_1} }+ \kl{ P_{Y_2|X_2}(\cdot | x_2) }{  P_{Y_2|X_2}(\cdot | 0) } \\
&=\kl{ P_{Y_1|X_1}(\cdot | x_1) }{  P_{Y^\star_1} }+ \frac{x_2^2}{2}\\
&=\kl{ P_{Y_1|X_1}(\cdot | x_1) }{  P_{Y^\star_1} }+ \frac{r_m^2 \left(1-\frac{x_1^2}{r_p^2}  \right)}{2},
\end{align} 
where the last step uses Lemma~\ref{thm:boundary_condition},  i.e., $\frac{x_1^2}{r_p^2}+\frac{x_2^2}{r_m^2}=1$. 
Now let
\begin{align}
f(x_1)&= \kl{ P_{Y_1|X_1}(\cdot | x_1) }{  P_{Y^\star_1} }+ \frac{r_m^2 \left(1-\frac{x_1^2}{r_p^2}  \right)}{2} \nonumber\\
&\quad- \kl{ P_{Y_1|X_1}(\cdot | r_p) }{  P_{Y^\star_1} },
\end{align}
where, due to symmetry, we have $\kl{ P_{Y_1|X_1}(\cdot | r_p) }{  P_{Y^\star_1} }=I(X_1^\star;Y_1^\star)$.  Furthermore, the KKT conditions imply that if $P_{X_1^\star}$ is optimal, then
\begin{equation}
{\rm supp}(P_{X_1^\star}) \subseteq \{ \text{ maxima of }  f(x_1) \text{ on } [-r_p,r_p] \}.
\end{equation} 
Therefore, we study the maxima of $f$ by finding the derivative of $f$, which is
\begin{align}
 f'(x_1)&=-  \frac{\rm d}{{\rm d}x_1} \bbE[ \log  f_{Y_1^\star}(Z+x_1)  ] -\frac{  r_m^2 x_1}{r_p^2}\\
&=x_1- \bbE[ \bbE[X_1^\star|Y_1^\star=Z+x_1] ] -\frac{  r_m^2 x_1}{r_p^2}  \label{eq:Using_Tweedy} \\
&= \left( 1-\frac{  r_m^2}{r_p^2}   \right) x_1- r_p  \bbE[ \tanh(r_p(Z+x_1)) ] \label{eq:Using_Expression_For_Optimal}\\
&= \bbE \left[  \left( 1-\frac{  r_m^2}{r_p^2}   \right) Z_{x_1}- r_p  \tanh(r_p Z_{x_1}) \right]\\
&= \bbE \left[  \eta(Z_{x_1}) \right], \label{eq:Definining_eta_function}
\end{align} 
where in \eqref{eq:Using_Tweedy} we have used Tweedie's formula~\cite{robbins1992empirical}
\begin{equation}
\bbE[X|Y=y]= y+\frac{\rm d}{{\rm d}y } \log f_Y(y).
\end{equation} 
In \eqref{eq:Using_Expression_For_Optimal} we used $ \bbE[X_1^\star|Y_1^\star=y]= r_p\tanh(r_p y) $; in \eqref{eq:Definining_eta_function} we defined $Z_{x_1} = Z+x_1$ and
\begin{equation}
\eta(y)=  \left( 1-\frac{  r_m^2}{r_p^2}   \right) y- r_p \tanh(r_p y). 
\end{equation} 

Now observe the following about the function $f'$:
\begin{enumerate}
\item at $x_1=0$, $f'$ has a sign change.  This follows from $f'(0)=0$ and that $f$ is even and, therefore, $f'$ is odd and there must be a sign change at $x_1=0$.
\item the function $f'$ can have at most three sign changes.  This follows from because $\eta(y)$ can have at most three sign changes and using Karlin's oscillation theorem (see, e.g., \cite{dytso2019capacity}) which states that the number of sign changes of $f'$ is upper-bounded by the number of sign changes of $\eta$. 
\item if  $r_m>0$ and the function $f'$ has exactly three sign changes, then sign changes must be in the order $-, +, -, +$.  This follows because there is a sign change at $x_1=0$ and $\lim_{ x_1 \to -\infty} f'(x_1)=-\infty$ and  $\lim_{ x_1 \to \infty} f'(x_1)=\infty$. Therefore, the  sign changes at $x_1 \ne 0$ must be from negative to positive, and the sign change at $x_1=0$ must be from positive to negative. 
\end{enumerate} 

In what follows, we focus on $x_1 \ge 0$. The above properties imply that only the following three cases are possible. 

{\it Case 1:} $f'$ is non-negative on $x_1 \ge 0$.  This implies that, the function $f$ is increasing and,  therefore,  is maximized at $x_1=r_p$. 

{\it Case 2:}  $f'$ is non-positive on $x_1 \ge 0$.  In this case, the function  $f$ is decreasing and is maximized at $x_1=0$.

{\it Case 3:} On $x_1>0$, $f'$  has a single sign change from negative to positive. Thus, $f$ has only one local minimum and no local maxima, and $f$ is maximized at $x_1=0$ or $x_1=r_p$.  

The above shows the function $f$ on $x_1 \ge 0$ can have maxima only at either  $x_1=0$ or $x_1=r_p$.  This  implies ${\rm supp}(P_{X_1}^\star)= \{-r_p,r_p \}$ if and only if 
  \begin{align}
  &f(0)<f(r_p) \nonumber\\  & \hspace{-0.1cm}\Leftrightarrow     \kl{ P_{Y_1|X_1}(\cdot | 0) }{  P_{Y^\star_1} }+ \frac{r_m^2 }{2}<  \kl{ P_{Y_1|X_1}(\cdot | r_p) }{  P_{Y^\star_1} }\\
&\hspace{-0.1cm} \Leftrightarrow   \frac{r_m^2}{2} < \hspace{-0.1cm} \int_{-\infty}^\infty \hspace{-0.2cm}\left(   \phi(y-r_p) - \phi(y) \right) \log \frac{1}{  \phi(y-r_p) + \phi(y+r_p)}   {\rm d} y
  \end{align}

\subsection{Proof of Theorem~\ref{thm:discretness}} 
\label{proof:thm:discretness}

Before proving the theorem, we present definitions and ancillary results.

\begin{definition} Let $\bfV \in \mathbb{R}^2$ have moment generating function $M_{\bfV}(\bft),\, \bft \in \mathbb{R}^2$. Then the cumulant generating function is
\begin{align}
K_{\bfV}(\bft)=\log( M_{\bfV}(\bft)  ), \,  \bft \in \mathbb{R}^2. 
\end{align} 
\end{definition} 

The following lemma is useful.  
\begin{lemma}   \label{lem:expression_for_log_pdf_Y} For $x_1 \in [-r_p,r_p]$ let 
\begin{equation}
f(x_1; P_{\bfX^\star})=  \bbE \left[   K_{\bar{\bfX}} \left(Z_1+x_1,Z_2+r_m \sqrt{1-\frac{x_1^2}{r_p^2}} \right) \right], 
\end{equation} 
where $\bar{\bfX}$ is distributed as $ \rmd P_{\bar{\bfX}}=  \frac{\rme^{ -\frac{  \|\bfx\|^2 }{2}} }{  \bbE[  \rme^{ -\frac{  \|\bfX^\star\|^2 }{2}}  ]} \rmd P_{\bfX^\star}$ (a tilted distribution).
The KKT conditions \eqref{eq:new_kkt_inequality}--\eqref{eq:new_KKT_equality} are equivalent to
\begin{align}
0 &\ge  \frac{x_1^2}{2} \left(1-\frac{r_m^2}{r_p^2} \right)- f(x_1; P_{\bfX^\star})-a,   \quad  x_1 \in [-r_p,r_p], \\
0 &=  \frac{x_1^2}{2} \left(1-\frac{r_m^2}{r_p^2} \right)- f(x_1; P_{\bfX^\star})-a,  \quad  x_1 \in \supp(P_{X_1^\star}). \label{eq:equalitzation_Equation}
\end{align}
where  $a=\sfC(r_p,r_m)-\log( 2 \pi) +\log \left( \bbE \left[  \rme^{ -\frac{  \|\bfX^\star\|^2 }{2}}  \right ]  \right)  + h(\bfZ)- 1-\frac{r_m^2}{2}$.  

\end{lemma}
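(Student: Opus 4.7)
\textbf{Proof plan for Lemma~\ref{lem:expression_for_log_pdf_Y}.} The plan is to rewrite both sides of the KKT conditions in Lemma~\ref{lem:new_KKT_conditions} as an affine-in-$\|\bfx\|^2$ term minus a Gaussian-smoothed cumulant generating function of a tilted version of $P_{\bfX^\star}$, and then use the ellipse constraint and the symmetry from Proposition~\ref{prop:symmetry} to reduce the condition to a one-dimensional inequality in $x_1$. The arithmetic is routine once the right tilt is chosen, so the main content is the algebraic manipulation.

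First I would write
\begin{equation}
\kl{ P_{\bfY|\bfX}(\cdot|\bfx)}{P_{\bfY^\star}} = -h(\bfZ) - \bbE[\log f_{\bfY^\star}(\bfx+\bfZ)],
\end{equation}
and then decompose the output density as
\begin{equation}
f_{\bfY^\star}(\bfy) = \frac{1}{2\pi}\,e^{-\|\bfy\|^2/2}\int e^{\bfy\cdot\bfx - \|\bfx\|^2/2}\,\rmd P_{\bfX^\star}(\bfx).
\end{equation}
Recognizing the remaining integral as $\bbE[e^{-\|\bfX^\star\|^2/2}]\cdot M_{\bar\bfX}(\bfy)$ for the tilt $\rmd P_{\bar\bfX}\propto e^{-\|\bfx\|^2/2}\rmd P_{\bfX^\star}$ gives
\begin{equation}
\log f_{\bfY^\star}(\bfy) = -\log(2\pi) - \tfrac{1}{2}\|\bfy\|^2 + \log\bbE\!\left[e^{-\|\bfX^\star\|^2/2}\right] + K_{\bar\bfX}(\bfy).
\end{equation}
Substituting $\bfy=\bfx+\bfZ$ and taking expectations (using $\bbE\|\bfZ\|^2=2$) yields
\begin{align}
\kl{P_{\bfY|\bfX}(\cdot|\bfx)}{P_{\bfY^\star}}
&= -h(\bfZ) + \log(2\pi) + \tfrac{1}{2}\|\bfx\|^2 + 1 \notag\\
&\quad - \log\bbE\!\left[e^{-\|\bfX^\star\|^2/2}\right] - \bbE[K_{\bar\bfX}(\bfx+\bfZ)].
\end{align}

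Next, by Theorem~\ref{thm:boundary_condition} we may restrict to $\bfx\in\partial\cE(r_p,r_m)$, where $\|\bfx\|^2 = x_1^2(1-r_m^2/r_p^2) + r_m^2$, turning the affine-in-$\|\bfx\|^2$ term into exactly $\tfrac{x_1^2}{2}(1-r_m^2/r_p^2)$ plus the constant $r_m^2/2$. The sign of $x_2$ in $K_{\bar\bfX}(x_1+Z_1,x_2+Z_2)$ is immaterial: Proposition~\ref{prop:symmetry} implies $(X_1^\star,-X_2^\star)\stackrel{d}{=}(X_1^\star,X_2^\star)$, hence the same holds for the tilt $\bar\bfX$ (the tilt factor depends only on $\|\bfx\|^2$), so $K_{\bar\bfX}(t_1,t_2)=K_{\bar\bfX}(t_1,-t_2)$, and using $-Z_2\stackrel{d}{=}Z_2$ lets us replace $x_2$ by $r_m\sqrt{1-x_1^2/r_p^2}$ without loss of generality. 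This identifies the expectation with $f(x_1;P_{\bfX^\star})$.

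Finally I would move $\sfC(r_p,r_m)$ to the left and collect all $\bfx$-independent terms into a single constant; matching coefficients shows this constant is exactly
\begin{equation}
a=\sfC(r_p,r_m)-\log(2\pi)+\log\bbE\!\left[e^{-\|\bfX^\star\|^2/2}\right]+h(\bfZ)-1-\tfrac{r_m^2}{2},
\end{equation}
and the KKT inequality and equality of Lemma~\ref{lem:new_KKT_conditions} then read exactly as in the statement. The only subtlety worth double-checking is the symmetry step that eliminates the $x_2$-dependence; once that is nailed down, the rest is bookkeeping.
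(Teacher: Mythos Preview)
Your proposal is correct and follows essentially the same route as the paper: expand $\log f_{\bfY^\star}(\bfy)$ to expose the cumulant generating function of the tilt $\bar\bfX$, plug in $\bfy=\bfx+\bfZ$, take expectations, and then use the ellipse parametrization $\|\bfx\|^2=x_1^2(1-r_m^2/r_p^2)+r_m^2$ together with symmetry to reduce to a one-variable condition. The paper handles the $x_2$-sign issue by simply restricting to the upper half-ellipse $\cE^+$ via Proposition~\ref{prop:symmetry}, whereas you argue it through the evenness of $K_{\bar\bfX}$ in its second argument and $-Z_2\stackrel{d}{=}Z_2$; both are equivalent and the remaining bookkeeping matches.
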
 
\begin{proof} See the supplementary material in Appendix~\ref{app:lem:expression_for_log_pdf_Y}.
\end{proof} 

\begin{definition} For  $z \in \mathbb{C}$ the principle square root function is
\begin{equation}
z^{ \frac{1}{2}}= |z|^{ \frac{1}{2} } \rme^{ \frac{i  \, {\rm arg}(z)}{2}}. 
\end{equation} 
\end{definition} 

\begin{rem}
The principle square root function is holomorphic on $ \mathbb{C} \setminus  \{ z:  {\rm Re}(z) \le 0, {\rm Im}(z)=0 \}  $ \cite{bak2010complex}.    
\end{rem}

\begin{lemma}\label{lem:complex_ext_of_cumulant} The function  $ \bfx  \mapsto \bbE \left[   K_{\bar{\bfX}} \left(\bfZ+\bfx \right) \right]$ has the complex extension 
\begin{equation}
  \bbE_{\bfZ} \left[   K_{\bar{\bfX}} \left(\bfZ+\bfz \right) \right]=   \bbE_{\bfZ} \left[  \log \bbE_{ \bar{\bfX}} \left [ \rme^{ (\bfZ+\bfz)^T \bar{\bfX}}  \right] \right], \,  \bfz \in \mathbb{C}^2. \label{eq:extention of K_X}
\end{equation}
Moreover, the function in \eqref{eq:extention of K_X} is an entire function, i.e., analytic on $\mathbb{C}^2$.  Consequently, the function $x_1 \mapsto f(x_1; P_{\bfX^\star})$ has a complex  extension
\begin{align}
&\tilde{f}(z; P_{\bfX^\star}) \notag\\
&=   \bbE_{\bfZ} \left[      K_{\bar{\bfX}} \left(Z_1+z,Z_2+r_m  \sqrt{  \left| 1-\frac{z^2}{r_p^2} \right| }  \rme^{ \frac{i  \, {\rm arg} \left(1-\frac{z^2}{r_p^2} \right)}{2}} \right)    \right]
\end{align} 
which is analytic on $z \in  \mathbb{C} \setminus  \{ z:   |{\rm Re}(z)| \ge r_p, {\rm Im}(z)=0 \}$. 
\end{lemma}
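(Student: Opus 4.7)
\emph{Proof plan.} The plan is to argue in two stages: first that $M_{\bar{\bfX}}$ is entire on $\bbC^2$ because $\bar{\bfX}$ is compactly supported, and then that Gaussian averaging of $\log M_{\bar{\bfX}}$ smooths its branch-point singularities into an entire function of the shift. The second claim then follows from the first by composing with the principal square root $w(z):=r_m\sqrt{1-z^2/r_p^2}$, whose natural domain of holomorphy is precisely $\bbC\setminus\{z\in\bbR:|z|\ge r_p\}$.

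For Step~1, Theorem~\ref{thm:boundary_condition} places $\supp(P_{\bar{\bfX}})$ inside $\partial\cE(r_p,r_m)$, which is compact. Hence for every $\bft=\bfu+i\bfv\in\bbC^2$,
\begin{equation}
|M_{\bar{\bfX}}(\bft)|\le \bbE_{\bar{\bfX}}\!\left[\rme^{\bfu^T\bar{\bfX}}\right]\le \rme^{r_p\|\bfu\|},
\end{equation}
and differentiation under the expectation is legal to all orders, so $M_{\bar{\bfX}}$ is entire on $\bbC^2$. This identifies the right-hand side of \eqref{eq:extention of K_X} as the analytic continuation of the real expression on any open set on which the inner logarithm can be taken single-valued.

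For Step~2, I would invoke Hartogs' theorem on separate analyticity. Fix $z_2$ and study $z_1\mapsto g(z_1,z_2):=\bbE_\bfZ[K_{\bar{\bfX}}(Z_1+z_1,Z_2+z_2)]$. The integrand has at worst logarithmic singularities along the codimension-$1$ complex variety $\{\bfw:M_{\bar{\bfX}}(\bfw+\bfz)=0\}$, which has Lebesgue measure zero in $\bfw$ and is locally integrable; combined with the Gaussian decay of $\phi(\bfw)$ and the at-most-linear growth of $\log|M_{\bar{\bfX}}|$, this gives $\bbE_\bfZ[|K_{\bar{\bfX}}(\bfZ+\bfz)|]<\infty$ for every $\bfz\in\bbC^2$. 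Morera's theorem on any closed triangle $\Delta\subset\bbC$ then gives
\begin{equation}
\oint_\Delta g(z_1,z_2)\,\rmd z_1=\bbE_\bfZ\!\left[\oint_\Delta K_{\bar{\bfX}}(Z_1+z_1,Z_2+z_2)\,\rmd z_1\right]=0,
\end{equation}
since for almost every realization of $\bfZ$ the contour $\Delta$ encloses no zero of $M_{\bar{\bfX}}(\cdot,Z_2+z_2)$ and Cauchy's theorem applies to the (locally analytic) inner integrand. The analogous argument in $z_2$, combined with continuity, yields joint analyticity by Hartogs. Step~3 is then immediate: composing the entire $g$ with the holomorphic map $z\mapsto(z,w(z))$ gives a function holomorphic exactly where $w$ is, namely on $\bbC\setminus\{z\in\bbR:|z|\ge r_p\}$.

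The principal obstacle lies in Step~2: although $\bar{\bfX}$ is bounded, $M_{\bar{\bfX}}$ generically has complex zeros (for instance $M(t)=\cosh(r_p t)$ for a two-point symmetric input vanishes at $t=i(k+1/2)\pi/r_p$), so $K_{\bar{\bfX}}$ is not entire on its own and the conclusion genuinely relies on the Gaussian smoothing of these branch points. The bookkeeping needed to justify the interchange of the $\bfZ$-expectation with the $z_1$-contour integral---showing that on each compact $\bfz$-set the logarithmic singularity is dominated by a $\bfZ$-integrable function, and that the zero set of $M_{\bar{\bfX}}$ meets any fixed $\Delta$ only on a $\bfZ$-measure-zero set---is the technically delicate part of the proof.
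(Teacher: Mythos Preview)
Your Step~3 matches the paper's argument exactly. The issue is in Step~2, and it is not merely bookkeeping: the claim that ``for almost every realization of $\bfZ$ the contour $\Delta$ encloses no zero of $M_{\bar{\bfX}}(\cdot,Z_2+z_2)$'' is false. Take your own two-point example, where $M(t)=\cosh(r_p t)$ has zeros at $t=i(k+\tfrac12)\pi/r_p$. The relevant integrand is $z_1\mapsto K_{\bar{\bfX}}(Z_1+z_1,\cdot)$, whose singularities sit at $z_1=-Z_1+i(k+\tfrac12)\pi/r_p$. As $Z_1$ ranges over $\bbR$ these translate horizontally, so any triangle $\Delta$ whose interior meets the line $\{\mathrm{Im}(z_1)=(k+\tfrac12)\pi/r_p\}$ contains a branch point for an entire interval of $Z_1$ values---a set of positive Gaussian measure. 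On that event the inner contour integral is not zero (indeed not even canonically defined without a branch choice), so your Morera/Fubini step breaks down. The Gaussian smoothing does ultimately kill these singularities, but not via the mechanism you propose.

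The paper avoids this problem by a different route. From Lemma~\ref{lem:expression_for_log_pdf_Y} one has, for real $\bfx$,
\[
\bbE\!\left[K_{\bar{\bfX}}(\bfZ+\bfx)\right]
= c + \frac{\|\bfx\|^2}{2} - \kl{P_{\bfY|\bfX}(\cdot|\bfx)}{P_{\bfY^\star}}
\]
for a constant $c$, so it suffices that $\bfx\mapsto\kl{P_{\bfY|\bfX}(\cdot|\bfx)}{P_{\bfY^\star}}$ extend to an entire function on $\bbC^2$; this is a known fact (the paper cites \cite{chan2005capacity}). The underlying reason is that one should \emph{not} substitute $\bfw=\bfy-\bfx$: writing
\[
-\bbE\!\left[\log f_{\bfY^\star}(\bfx+\bfZ)\right]
=-\int_{\bbR^2}\phi(\bfy-\bfx)\,\log f_{\bfY^\star}(\bfy)\,\rmd\bfy
\]
keeps the integration over real $\bfy$, where $f_{\bfY^\star}>0$ and $\log f_{\bfY^\star}$ has only quadratic growth, while $\phi(\bfy-\bfx)$ is entire in $\bfx$ with locally uniform Gaussian bounds. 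Differentiation under the integral (or Morera in $\bfx$) is then routine. If you want a self-contained argument, this is the one to give; the change of variables to $\bfZ$ is precisely what creates the moving branch points.
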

\begin{proof}
See the supplementary material in Appendix~\ref{app:lem:complex_ext_of_cumulant}.
\end{proof} 

The next lemma provides an upper bound on the peak power of $\left |   \tilde{f}(z; P_{\bfX^\star}) \right| $ along the imaginary axis.

\begin{lemma}\label{lem:Bound_on_abs_of_f}  For all $u \in \mathbb{R}$
\begin{equation}
\left |   \tilde{f}(i u; P_{\bfX^\star}) \right|  \le   \frac{ r_p^2}{2} +  \frac{ r_m^2}{2} +r_m^2 \sqrt{ 1+\frac{u^2}{r_p^2} }         +2\pi. 
\end{equation} 
\end{lemma}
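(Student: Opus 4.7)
The plan is to evaluate the complex extension $\tilde f$ at $z=iu$, apply the triangle inequality inside the $\bfZ$-expectation, and bound $|K_{\bar{\bfX}}(Z_1+iu,Z_2+s)|$ pointwise. At $z=iu$, the quantity $1-(iu)^2/r_p^2=1+u^2/r_p^2>0$ has principal argument $0$, so the second entry of $K_{\bar{\bfX}}$ is the real number $Z_2+s$ with $s:=r_m\sqrt{1+u^2/r_p^2}$. Thus
\begin{equation*}
|\tilde{f}(iu;P_{\bfX^\star})|\;\le\;\bbE_\bfZ\!\left[\bigl|K_{\bar{\bfX}}(Z_1+iu,\,Z_2+s)\bigr|\right].
\end{equation*}

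Write $K_{\bar{\bfX}}=\log|M_{\bar{\bfX}}|+i\arg M_{\bar{\bfX}}$ on the branch inherited from Lemma~\ref{lem:complex_ext_of_cumulant}, so that $|K_{\bar{\bfX}}|\le\bigl|\log|M_{\bar{\bfX}}|\bigr|+|\arg M_{\bar{\bfX}}|$. The argument piece contributes at most $2\pi$: the principal branch supplies $\pi$, with an extra $\pi$ of slack to absorb any winding introduced by the analytic continuation from the real axis, along which $M_{\bar{\bfX}}$ is positive.

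For the modulus I would invoke Proposition~\ref{prop:symmetry}: the coordinate-wise sign symmetries of $\bfX^\star$ are preserved under the Gaussian tilt $e^{-\|\bfx\|^2/2}$, so $\bar{\bfX}$ inherits them and
\begin{equation*}
M_{\bar{\bfX}}(t_1,t_2)=\bbE\!\left[\cosh(t_1\bar X_1)\cosh(t_2\bar X_2)\right].
\end{equation*}
The real-analytic identity $|\cosh(a+ib)|\le\cosh(a)$ combined with the support bounds $|\bar X_1|\le r_p$, $|\bar X_2|\le r_m$ gives $|M_{\bar{\bfX}}(Z_1+iu,Z_2+s)|\le\cosh(|Z_1|r_p)\cosh(|Z_2+s|r_m)$. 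Taking logarithms and applying $\log\cosh(x)\le x^2/2$ to the $Z_1$ factor and $\log\cosh(x)\le|x|$ to the $(Z_2+s)$ factor, then averaging and using $\bbE|Z_2+s|\le\sqrt{1+s^2}\le 1+s$ together with $r_m\le r_m^2/2+\tfrac12$, I obtain
\begin{equation*}
\bbE_\bfZ\bigl[\log|M_{\bar{\bfX}}|\bigr]\;\le\; \tfrac{r_p^2}{2}+\tfrac{r_m^2}{2}+r_m s+\tfrac12,
\end{equation*}
which matches the polynomial part of the stated bound since $r_m s=r_m^2\sqrt{1+u^2/r_p^2}$.

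The main obstacle is upgrading the one-sided estimate on $\log|M_{\bar{\bfX}}|$ to a bound on $\bigl|\log|M_{\bar{\bfX}}|\bigr|$; one must rule out large negative excursions where $|M_{\bar{\bfX}}(Z_1+iu,Z_2+s)|\ll 1$. I would handle this via the Gaussian-shift representation
\begin{equation*}
M_{\bar{\bfX}}(\bft)=\tfrac{2\pi}{c}\,e^{\|\mathrm{Re}\,\bft\|^2/2}\,f_{\bfY^\star}(\mathrm{Re}\,\bft)\,\psi(\mathrm{Re}\,\bft,\mathrm{Im}\,\bft)
\end{equation*}
derived in the proof of Lemma~\ref{lem:expression_for_log_pdf_Y}, where $|\psi|\le 1$ is a conditional characteristic function of $X_1^\star\mid\bfY^\star$. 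A Jensen argument rewrites $\bbE_\bfZ\bigl[|\log|M_{\bar{\bfX}}||\bigr]-\bbE_\bfZ[\log|M_{\bar{\bfX}}|]$ as a cross-entropy remainder involving $f_{\bfY^\star}$ plus $-\log c$; since $-\log c\le r_p^2/2\le 1$ when $r_p\le\sqrt 2$, these constants fit inside the $2\pi$ slack, completing the proof.
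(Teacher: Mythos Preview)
Your overall strategy---triangle inequality over $\bfZ$ followed by a pointwise bound on $|K_{\bar{\bfX}}|$---mirrors the paper's opening move. Where you diverge is in bounding the modulus term. The paper does not use symmetry or the $\cosh$ representation at all: after writing $|K_{\bar{\bfX}}(\bft)|\le\log|M_{\bar{\bfX}}(\bft)|+2\pi$, it pushes the absolute value inside the $\bar{\bfX}$-expectation via the triangle inequality (so that $\log\bigl|\bbE_{\bar{\bfX}}[e^{\bft^T\bar{\bfX}}]\bigr|\le\log\bbE_{\bar{\bfX}}\bigl[e^{(\mathrm{Re}\,\bft)^T\bar{\bfX}}\bigr]$), then applies Jensen to pull the \emph{outer} $\bfZ$-expectation inside the logarithm, and finally evaluates the Gaussian moment generating function $\bbE_{\bfZ}[e^{Z_j\bar X_j}]=e^{\bar X_j^2/2}$ together with the support bounds $|\bar X_1|\le r_p$, $|\bar X_2|\le r_m$. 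This delivers the stated constant exactly, with no stray $+\tfrac12$ and no appeal to Proposition~\ref{prop:symmetry}.

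The genuine gap in your proposal is the last paragraph. You are right that passing from an upper bound on $\log|M_{\bar{\bfX}}|$ to one on $\bigl|\log|M_{\bar{\bfX}}|\bigr|$ is the delicate step, and your factorization $M_{\bar{\bfX}}(Z_1+iu,Z_2+s)=M_{\bar{\bfX}}(Z_1,Z_2+s)\,\psi$ with $|\psi|\le 1$ is legitimate (it is indeed a tilted characteristic function of $\bar X_1$). But the sentence ``a Jensen argument rewrites \ldots\ as a cross-entropy remainder \ldots\ these constants fit inside the $2\pi$ slack'' is not a proof: nothing you have written controls $\bbE_\bfZ[-\log|\psi|]$, and since a characteristic function can vanish, $-\log|\psi|$ is not obviously integrable against the Gaussian law of $\bfZ$, let alone bounded by a $u$-independent constant. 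The paper simply asserts the one-sided inequality $|\log z|\le\log|z|+2\pi$ and proceeds with the Jensen route above; you have correctly isolated the subtle point but have not closed it.
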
 
\begin{proof}
See the supplementary material in Appendix~\ref{app:lem:Bound_on_abs_of_f}. 
\end{proof}

\emph{Proof of Theorem~\ref{thm:discretness}:} Suppose $\supp(P_{X_1^\star})$ has infinite cardinality. Starting with one of the KKT equations in \eqref{eq:equalitzation_Equation}, we have
\begin{align}
0 &=  \frac{x_1^2}{2} \left(1-\frac{r_m^2}{r_p^2} \right)- f(x_1; P_{\bfX^\star})-a,  \,  x_1 \in \supp(P_{X_1^\star})\\
&\Rightarrow    \, 
0 =  \frac{x_1^2}{2} \left(1-\frac{r_m^2}{r_p^2} \right)- f(x_1; P_{\bfX^\star})-a,  \,  x_1 \in (-r_p,r_p) \label{eq:Identity_for_analytic}\\
&\Rightarrow    \,  0 =  \frac{z^2}{2} \left(1-\frac{r_m^2}{r_p^2} \right)- \tilde{f}(z; P_{\bfX^\star})-a, \notag\\
&  \quad \quad \quad \,  z \in  \mathbb{C} \setminus  \left\{ z:   |{\rm Re}(z)| \ge r_p, {\rm Im}(z)=0 \right\}\label{eq:Complex_analytic_Extension}\\
&\Rightarrow  0 = -  \frac{u^2}{2} \left(1-\frac{r_m^2}{r_p^2} \right)- \tilde{f}(iu; P_{\bfX^\star})-a,  \, u\in \mathbb{R} , \label{eq:Finnal_ mplicaionion} 
\end{align} 
where \eqref{eq:Identity_for_analytic} follows  because (Bolzano-Weierstrass Theorem with the Identity Theorem) an analytic function that is equal to zero  on a closed bounded set of infinite cardinality is equal to zero everywhere on its domain;  and \eqref{eq:Complex_analytic_Extension} follows because $\tilde{f}(z; P_{\bfX^\star})$ is the analytic extension of $f(x; P_{\bfX^\star})$. 
Now, Lemma~\ref{lem:Bound_on_abs_of_f} shows that  $ | \tilde{f}(iu; P_{\bfX^\star}) | \in O(u)$, together with \eqref{eq:Finnal_ mplicaionion}, leads to a contradiction.  

\subsection{Proof of Theorem~\ref{thm:majoraxisoptimality}  } 
\label{proof:thm:majoraxisoptimality}

By symmetry of the channel (see Proposition~\ref{prop:symmetry}), we may focus on the case $x_1>0$.

\begin{lemma}\label{lemma:increasingDx1}
	Suppose $0\le r_m \le r_p$ and $\Pr( X_1>t)=0$ for some $0<t<r_p$. Then for $x_1>t$, we have
	\begin{equation}
		\frac{\rmd}{\rmd x_1}\kl{ P_{\bfY|\bfX}(\cdot | (x_1,x_2)) }{  P_{\bfY} } > 0, \qquad \forall x_2\in \mathbb{R}.
	\end{equation}
\end{lemma}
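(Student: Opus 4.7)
The plan is to compute $\frac{\partial}{\partial x_1}\kl{P_{\bfY|\bfX}(\cdot|\bfx)}{P_{\bfY}}$ in closed form via Tweedie's formula and then exploit the support hypothesis to bound it from below. Since the noise $\bfZ$ is standard Gaussian, the KL divergence splits as
\begin{equation}
\kl{P_{\bfY|\bfX}(\cdot|\bfx)}{P_{\bfY}} = -h(\bfZ) - \bbE_{\bfZ}\!\left[\log f_{\bfY}(\bfx+\bfZ)\right],
\end{equation}
so only the last term depends on $\bfx$. Interchanging derivative and expectation (routine, since $f_{\bfY}$ is a Gaussian convolution of $P_{\bfX}$ and is therefore $C^\infty$ with a controlled score) and then applying Tweedie's formula, $\partial_{y_1}\log f_{\bfY}(\bfy)=\bbE[X_1|\bfY=\bfy]-y_1$, yields after using $\bbE[Z_1]=0$:
\begin{equation}
\frac{\partial}{\partial x_1}\kl{P_{\bfY|\bfX}(\cdot|(x_1,x_2))}{P_{\bfY}} = x_1 - \bbE_{\bfZ}\!\left[\bbE[X_1|\bfY=\bfx+\bfZ]\right].
\end{equation}
Note that $x_2$ enters only through the inner conditional expectation; no separate differentiation in $x_2$ is required.

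For the final step, I would observe that the hypothesis $\Pr(X_1>t)=0$ forces $P_{\bfX}$ to be concentrated on $\{x_1\le t\}$. The posterior $P_{\bfX|\bfY=\bfy}$ is absolutely continuous with respect to the prior (Bayes's rule, using that the Gaussian likelihood is strictly positive), so it too is concentrated on $\{x_1\le t\}$. Hence $\bbE[X_1|\bfY=\bfy]\le t$ pointwise in $\bfy$, and averaging over $\bfZ$ preserves this bound. Combining with the derivative formula above,
\begin{equation}
\frac{\partial}{\partial x_1}\kl{P_{\bfY|\bfX}(\cdot|(x_1,x_2))}{P_{\bfY}} \ge x_1 - t > 0
\end{equation}
for all $x_1>t$ and all $x_2\in\bbR$, as claimed.

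The only technical point is the interchange of derivative and expectation, which is standard given the smoothness and decay of $f_{\bfY}$ (the input is bounded, so the score has at worst linear growth that is absorbed by the Gaussian weighting), and does not constitute a substantive obstacle. The real content is the Tweedie-plus-posterior-bound identity, which reduces the monotonicity question to a one-line MMSE comparison and makes transparent why the hypothesis $0<t<r_p$ is natural: it is exactly what allows the posterior MMSE of $X_1$ to be strictly smaller than the evaluation point.
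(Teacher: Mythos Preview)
Your proof is correct and follows essentially the same route as the paper: differentiate the KL divergence in $x_1$, apply Tweedie's formula to obtain $x_1-\bbE_{\bfZ}\!\left[\bbE[X_1\mid \bfY=\bfx+\bfZ]\right]$, and then bound the posterior mean by $t$ using the support hypothesis. The paper's version is slightly terser (it omits the justification for the derivative--expectation interchange and the absolute-continuity remark), but the argument is identical in substance.
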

\begin{proof}
	For all $x_2 \in \mathbb{R}$, we can write
	\begin{align}
		&\frac{\rmd}{\rmd x_1}\kl{ P_{\bfY|\bfX}(\cdot | (x_1,x_2)) }{  P_{\bfY} }  \notag\\
  &= - \frac{d}{dx_1} \bbE \left[ \log f_{\bfY}(\bfx+\bfZ)\right] \\
		&=-\bbE \left[\nabla \log f_{\bfY}(\bfx+\bfZ)\right]^\dagger \cdot \frac{d}{dx_1}\bfx \\
		&=-\bbE \left[ \bbE\left[\bfX | \bfY = \bfx+\bfZ\right]-\bfx+\bfZ  \right]^\dagger \cdot \left[\begin{array}{c}
		1 \\ 0
		\end{array}  \right] \label{eq:use_Tweedie_vector} \\
		&=-\bbE \left[ \bbE\left[X_1 | \bfY = \bfx+\bfZ\right] \right] +x_1 \\
		&\ge -t+x_1 \label{eq:no_mass_after_t},
	\end{align}
	where~\eqref{eq:use_Tweedie_vector} follows from Tweedie's formula~\cite{robbins1992empirical} and~\eqref{eq:no_mass_after_t} due to $\Pr( X_1>t)=0$. Finally, combine~\eqref{eq:no_mass_after_t} with $x_1 > t$.
\end{proof}
Arguing by contradiction, let $t<r_p$ be the largest value of the $\supp(P_{X_1^\star})$. Then, Lemma~\ref{lemma:increasingDx1} implies
\begin{equation}
    \kl{ P_{\bfY|\bfX}(\cdot | (x_1,x_2)) }{  P_{\bfY^\star} } > \kl{ P_{\bfY|\bfX}(\cdot | (t,x_2)) }{  P_{\bfY^\star} }
\end{equation}
for all $x_1>t$ and $x_2 \in \bbR$. This would violate the KKT condition~\eqref{eq:original_KKT_inequality}. 
As a consequence, we necessarily have $(r_p,0) \in \supp(P_{\bfX^\star})$ and, by symmetry, $(-r_p,0) \in \supp(P_{\bfX^\star})$.

\section{Discussion and Conclusion}
\label{sec:disc_conc}
We studied  the capacity-achieving distribution of $2\times 2$ MIMO channels subject to a peak-power constraint. We characterized conditions for the optimality of a two-point distribution and, by transforming the problem into a one-dimensional problem, we derived sufficient conditions for the discreteness of capacity-achieving distributions. 

There are several interesting directions for future work. For example, an ambitious goal is extending the results to an any number of antennas. 
It would also be intriguing to address questions in the $2\times 2 $ setting. For instance, let  $P_{\bfX^\star_{r_p,r_m}}$ be the optimal input distribution for the specified values of $r_p$ and $r_m$. Since capacity is a concave function of input constraint \cite{shannon1959coding} (see also \cite[Ch.~3.3]{el2011network}) and since concave functions are continuous  \cite{webster1994convexity}, we have
\begin{align}
&\lim_{r_m \to r_p}  \sfC(r_p,r_m)= \sfC(r_p,r_p)  \notag\\
&\quad \Longleftrightarrow  \lim_{r_m \to r_p}  I(\bfX^\star_{r_p,r_m};  \bfX^\star_{r_p,r_m}+\bfZ)= I(\bfX^\star_{r_p};  \bfX^\star_{r_p}+\bfZ),
\end{align}
where $\bfX^\star_{r_p} =\bfX^\star_{r_p,r_p} $. 
Note that $\bfX^\star_{r_p}$ is uniformly distributed on the circle of radius $r_p$. Furthermore, by the continuity of mutual information in the distribution \cite{wu2011functional}, we have
\begin{equation}
\lim_{r_m \to r_p}  \bfX^\star_{r_p,r_m} \stackrel{d}{=} \bfX^\star_{r_p}.
\end{equation} 
Note that $P_{\bfX^\star_{r_p}}$ has infinite support while $P_{\bfX^\star_{r_p,r_m}}$ has finite support for all $r_m <r_p \le \sqrt{2}$ by Theorem~\ref{thm:discretness}.  Thus, an interesting open problem is the rate at which the cardinality of the set $\supp(P_{\bfX^\star_{r_p,r_m}})$ approaches infinity as $r_m \to r_p$.

\bibliographystyle{IEEEtran}
\bibliography{biblio}

\clearpage

\newpage 
\begin{appendices}

\section{Proof of the Equivalence of \eqref{eq:Original_cap} and \eqref{eq:Capacity_Def} }
\label{sec:proof_of_equivalence}
Let $\bfU \bfQ \bfV^\sfT$ be the singular-value decomposition of $\bfH$, i.e., $\bfU$ and $\bfV$ are unitary and $\bfQ$ is diagonal. We have 
\begin{align}
 \max_{\bfX :  \| \bfX\| \le 1} & I \left(\bfX; \mathsf{\bfH} \bfX+\bfZ \right) \nonumber \\
 &\overset{(a)}= \max_{\bfX :  \| \bfX\| \le 1} I \left(\bfV^\sfT \bfX; \bfU^\sfT( \mathsf{\bfH} \bfX+\bfZ) \right) \nonumber \\
  &\overset{(b)}= \max_{\bfX :  \| \bfX\| \le 1} I \left(\bfV^\sfT\bfX; \bfQ \bfV^\sfT \bfX+\bfZ) \right)  \nonumber \\
    &\overset{(c)}= \max_{ \tilde{\bfX} :  \|   \tilde{\bfX}\| \le 1} I \left(\tilde{\bfX}; \bfQ \tilde{\bfX}+\bfZ \right) \nonumber \\
     &\overset{(d)}= \max_{ \bar{\bfX} :  \|  \bfQ^{-1} \bar{\bfX}\| \le 1} I \left(\bar{\bfX}; \bar{\bfX}+\bfZ \right) \nonumber \\
     &\overset{(e)}= \max_{ \bar{\bfX} : \, \frac{\bar{X}_1^2}{\sigma_1^2} +\frac{\bar{X}_2^2}{\sigma_2^2}  \le 1  } I \left(\bar{\bfX}; \bar{\bfX}+\bfZ \right)  \nonumber \\
     &= \sfC(\sigma_1,\sigma_2),
 \end{align}
 where step $(a)$ follows because mutual information is invariant under one-to-one transformations; step $(b)$ follows because  $\bfU^\sfT\bfZ$ and $\bfZ$ have the same distribution; step $(c)$ follows because the Euclidean norm is invariant under unitary transformation; in step $(d)$ if one of the diagonal elements in $\bfQ$ is zero we set the corresponding diagonal value of the inverse  to positive infinity; and in step $(e)$ $\sigma_1$ and $\sigma_2$ are the singular values of $\bfQ$. 

\section{Derivatives of Information Density}
Consider the information density
\begin{equation}
    i(\bfx; P_{\bfY}) = \kl{ P_{\bfY|\bfX}(\cdot | \bfx) }{  P_{\bfY} }
\end{equation}
evaluated on the upper-half of the ellipse $\del\cE(r_p,r_m) \cap \{\bfx : \: x_2\ge 0\}$, i.e., for $\bfx \in \del\cE^{+}$ where
\begin{equation}
    \del\cE^{+} = \left\{\left(x_1, r_m \sqrt{1-\frac{x_1^2}{r_p^2}}\right):\: x_1 \in [-r_p, r_p] \right\}.
\end{equation}
\begin{lemma} \label{lem:der_info_den_on_ellipse}
    For $\bfx \in \del\cE^{+}$, we have
    \begin{align}
        &\frac{\rmd}{\rmd x_1} i(\bfx; P_{\bfY}) \notag\\
        &=-\bbE \left[ \bbE\left[X_1-\frac{r_m}{r_p^2 \sqrt{1-\frac{x_1^2}{r_p^2}}}x_1 X_2 | \bfY = \bfx+\bfZ\right]  \right] \notag\\
        &\quad +x_1\left(1-\frac{r_m^2}{r_p^2}\right)
    \end{align}
    and
    \begin{align}
        \frac{\rmd^2}{\rmd x_1^2} i(\bfx; P_{\bfY})
        &= \bbE\left[\dot{\bfx}^\dagger (\sfI-\mathsf{Var} (\bfX| \bfY=\bfx+\bfZ)  ) \dot{\bfx} \right] \notag\\
        & \quad +\bbE \left[ \bbE\left[\frac{r_m}{r_p^2 \left(1-\frac{x_1^2}{r_p^2}\right)^{\frac{3}{2}}}X_2 | \bfY = \bfx+\bfZ\right] \right] \notag\\
        & \quad-\frac{r_m^2}{r_p^2 \left(1-\frac{x_1^2}{r_p^2}\right)}.
    \end{align}
\end{lemma}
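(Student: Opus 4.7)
The plan is to parametrize the upper half of the ellipse by $x_1 \in [-r_p, r_p]$, so that $\bfx(x_1) = (x_1, r_m\sqrt{1-x_1^2/r_p^2})^\sfT$, and then apply the chain rule to differentiate $i(\bfx(x_1); P_{\bfY}) = -h(\bfZ) - \bbE[\log f_{\bfY}(\bfx(x_1)+\bfZ)]$ along this curve. The plan breaks into (i) computing the tangent vector $\dot{\bfx}$ and the acceleration vector $\ddot{\bfx}$, and (ii) expressing the gradient and Hessian of $\bfx \mapsto i(\bfx;P_{\bfY})$ in terms of posterior quantities that I would borrow from Tweedie's formula and from the Hessian identity already established in the proof of Lemma~\ref{lem:sub_harmonic_KL}.

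For the first derivative, a direct computation would give $\dot{\bfx}(x_1) = \bigl(1,\; -\tfrac{r_m x_1}{r_p^2\sqrt{1-x_1^2/r_p^2}}\bigr)^\sfT$. Then the chain rule yields
\begin{equation}
    \frac{\rmd}{\rmd x_1} i(\bfx; P_{\bfY}) = \dot{\bfx}^\dagger \bbE\!\left[-\nabla_{\bfx}\log f_{\bfY}(\bfx+\bfZ)\right].
\end{equation}
Applying Tweedie's formula $\nabla_{\bfy}\log f_{\bfY}(\bfy) = \bbE[\bfX\,|\,\bfY=\bfy]-\bfy$ exactly as in~\eqref{eq:use_Tweedie_vector}, and then substituting the two components of $\dot{\bfx}$ together with $\dot{\bfx}^\dagger \bfx = x_1(1-r_m^2/r_p^2)$, produces the first formula in the lemma after collecting terms.

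For the second derivative I would use the standard second-order chain rule
\begin{equation}
    \frac{\rmd^2}{\rmd x_1^2} i(\bfx(x_1); P_{\bfY}) = \dot{\bfx}^\dagger \, \sfH_{\bfx} i(\bfx; P_{\bfY}) \, \dot{\bfx} + \bigl(\nabla_{\bfx} i(\bfx;P_{\bfY})\bigr)^{\!\dagger}\,\ddot{\bfx}.
\end{equation}
For the first term I would substitute the identity $\sfH_{\bfx} \,i(\bfx;P_{\bfY}) = \sfI - \bbE[\mathsf{Var}(\bfX|\bfY=\bfx+\bfZ)]$ obtained from~\eqref{eq:Hessian_of_KL}. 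For the second term I would compute $\ddot{\bfx}$ directly: differentiating $\dot{x}_2$ once more collapses cleanly to $\ddot{\bfx} = \bigl(0,\; -\tfrac{r_m}{r_p^2(1-x_1^2/r_p^2)^{3/2}}\bigr)^\sfT$ because the two contributions share a common denominator that simplifies via $1-x_1^2/r_p^2 + x_1^2/r_p^2 = 1$. Combining $\ddot{\bfx}$ with the Tweedie representation of $\nabla_{\bfx} i$ on the ellipse (where $x_2 = r_m\sqrt{1-x_1^2/r_p^2}$) produces the remaining $\bbE[\bbE[X_2|\bfY=\bfx+\bfZ]]$ term and the deterministic $-r_m^2/(r_p^2(1-x_1^2/r_p^2))$ term, matching the stated expression.

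I do not foresee a genuine obstacle: the argument is a careful chain-rule calculation glued onto two already-established identities (Tweedie and the KL-Hessian formula). The only mildly delicate points are justifying differentiation under the expectation $\bbE[\log f_{\bfY}(\bfx+\bfZ)]$, which is standard thanks to the Gaussian smoothing ensuring $f_{\bfY}$ is smooth with rapidly decaying derivatives, and handling the $1/\sqrt{1-x_1^2/r_p^2}$ singularities at the endpoints $x_1 = \pm r_p$, which is harmless since the lemma is stated on the open interior of the parametrization.
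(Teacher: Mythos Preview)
Your proposal is correct and essentially identical to the paper's own proof: both compute $\dot{\bfx}$ and $\ddot{\bfx}$ explicitly, apply the first- and second-order chain rules along the curve, and then invoke Tweedie's formula for the gradient term and the Hatsel--Nolte identity~\eqref{eq:Hessian_of_KL} for the Hessian term. The only cosmetic difference is that the paper writes out each step of the chain rule with labeled equalities, whereas you summarize the algebra verbally.
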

\begin{proof}
First, let 
    \begin{equation}
        \dot{\bfx} = \frac{\rmd }{\rmd x_1}\bfx =  \left[\begin{array}{c}
		1 \\ -\frac{r_m}{r_p^2 \sqrt{1-\frac{x_1^2}{r_p^2}}}x_1
		\end{array}  \right]
    \end{equation}
    and
    \begin{equation}
        \ddot{\bfx} = \frac{\rmd^2 }{\rmd x_1^2}\bfx =  \left[\begin{array}{c}
		0 \\ -\frac{r_m}{r_p^2 \left(1-\frac{x_1^2}{r_p^2}\right)^{\frac{3}{2}}}
		\end{array}  \right].
    \end{equation}
    
    For the first derivative, we have:
    \begin{align}
        -&\frac{\rmd}{\rmd x_1} i(\bfx; P_{\bfY}) \notag\\
        &=  \frac{\rmd}{\rmd x_1} \bbE \left[ \log f_{\bfY}(\bfx+\bfZ)\right] \nonumber \\
		&\overset{(a)}=\bbE \left[\nabla \log f_{\bfY}(\bfx+\bfZ)\right]^\dagger \cdot \frac{\rmd}{\rmd x_1}(\bfx+\bfZ) \nonumber \\
		&\overset{(b)}=\bbE \left[ \bbE\left[\bfX | \bfY = \bfx+\bfZ\right]-\bfx-\bfZ  \right]^\dagger \cdot \dot{\bfx} \nonumber \\
        &= \bbE \left[ \bbE\left[X_1-\frac{r_m}{r_p^2 \sqrt{1-\frac{x_1^2}{r_p^2}}}x_1 X_2 | \bfY = \bfx+\bfZ\right]  \right]  \notag\\
        & \quad -x_1\left(1-\frac{r_m^2}{r_p^2}\right),
    \end{align}
    where step $(a)$ holds by applying the chain rule for the derivative of composite multivariable functions; and step $(b)$ follows from Tweedie's formula. 

    For the second derivative, we can write
    \begin{align}
        -&\frac{\rmd^2}{\rmd x_1^2} i(\bfx; P_{\bfY}) \notag\\
        &= \frac{\rmd^2 }{\rmd x_1^2} \bbE \left[ \log f_{\bfY}(\bfx+\bfZ)\right] \nonumber \\
        &\overset{(a)}=\bbE\left[\dot{\bfx}^\dagger \sfH[\log f_{\bfY}](\bfx+\bfZ) \dot{\bfx} \right]+\bbE \left[\nabla \log f_{\bfY}(\bfx+\bfZ)\right]^\dagger \cdot \ddot{\bfx} \nonumber \\
        &\overset{(b)}= \bbE\left[\dot{\bfx}^\dagger (\mathsf{Var} (\bfX| \bfY=\bfx+\bfZ) - \sfI) \dot{\bfx} \right] \notag\\
        &\quad +\bbE \left[ \bbE\left[\bfX | \bfY = \bfx+\bfZ\right]-\bfx  \right]^\dagger \cdot \ddot{\bfx}  \nonumber \\
        &= \bbE\left[\dot{\bfx}^\dagger (\mathsf{Var} (\bfX| \bfY=\bfx+\bfZ) - \sfI) \dot{\bfx} \right] \notag\\
        & \quad +\bbE \left[ \bbE\left[-\frac{r_m}{r_p^2 \left(1-\frac{x_1^2}{r_p^2}\right)^{\frac{3}{2}}}X_2 | \bfY = \bfx+\bfZ\right] \right] \notag\\
        & \quad +\frac{r_m^2}{r_p^2 \left(1-\frac{x_1^2}{r_p^2}\right)},
    \end{align}
    where step $(a)$ holds by the chain rule for the second derivative of composite multivariable functions; and step $(b)$ follows by the Hatsel-Nolte identity and Tweedie's formula.
\end{proof}

\section{Proof of Lemma~\ref{lem:expression_for_log_pdf_Y} }
\label{app:lem:expression_for_log_pdf_Y}

We first re-write the log of the pdf of $\bfY^\star$
\begin{align}
&\log f_{\bfY^\star}(\bfy) \notag\\
&=   \log \left( \bbE \left[  \frac{1}{2 \pi }\rme^{ -\frac{ \|\bfy\|^2 - 2 \bfy^T \bfX^\star +\|\bfX^\star
\|^2 }{2}}  \right] \right) \nonumber \\
&=-\log( 2 \pi) - \frac{\|\bfy\|^2}{2}+   \log \left( \bbE \left [ \rme^{ -\frac{  - 2\bfy^T \bfX^\star+\|\bfX^\star\|^2 }{2}}  \right ] \right) \nonumber\\
&=-\log( 2 \pi) - \frac{\|\bfy\|^2}{2}+  \log\left( \bbE \left[  \rme^{ -\frac{  \|\bfX^\star\|^2 }{2}}  \right ] \right) \notag\\
& \quad + \log \left( \bbE \left [ \rme^{\bfy^T  \bar{\bfX} } \right ] \right) \nonumber \\
&=-\log( 2 \pi) - \frac{\|\bfy\|^2}{2}+  \log \left( \bbE \left[  \rme^{ -\frac{  \|\bfX^\star\|^2 }{2}} \right  ]  \right)+ K_{\bar{\bfX}}(\bfy) \label{eq:Representation_of_f_Y} 
\end{align} 
where $ K_{\bar{\bfX}}(\bfy)$ is the cumulant generating function of $\bar{\bfX}$ and  $\bar{\bfX}$ is distributed according to $ \rmd P_{\bar{\bfX}}=  \frac{\rme^{ -\frac{  \|\bfx\|^2 }{2}} }{  \bbE[  \rme^{ -\frac{  \|\bfX^\star\|^2 }{2}}  ]} \rmd P_{\bfX^\star}$.  

In what follows, the symmetry in Proposition~\ref{prop:symmetry} implies that we can consider only $x_2 \ge 0$ and restrict attention to the upper half of the ellipse
\begin{equation}
 \cE^{+} \equiv \partial  \cE(r_p,r_m) \cap \{ \bfx:  x_2 \ge 0 \} . 
\end{equation}

Now  for $\bfx \in  \cE^{+}$, the relative entropy can be rewritten as 
\begin{align}
&\kl{ P_{\bfY|\bfX}(\cdot | \bfx) }{  P_{\bfY^\star} } \nonumber\\
&= -\bbE \left[  \log f_{\bfY^\star}(\bfY)|\bfX=\bfx \right] - h(\bfZ) \nonumber\\
&\overset{(a)}=\log( 2 \pi)- \log \left( \bbE \left[  \rme^{ -\frac{  \|\bfX^\star\|^2 }{2}}  \right ]  \right)  - h(\bfZ) \notag\\
& \quad + \bbE \left[  \frac{\|\bfY\|^2}{2} |\bfX=\bfx \right]-  \bbE \left[   K_{\bar{\bfX}}(\bfY) |\bfX=\bfx \right] \nonumber\\
&\overset{(b)}=\log( 2 \pi)- \log \left( \bbE \left[  \rme^{ -\frac{  \|\bfX^\star\|^2 }{2}}  \right ]  \right)  - h(\bfZ) \notag\\
& \quad +  \frac{2+\|\bfx\|^2}{2}-  \bbE \left[   K_{\bar{\bfX}}(Z_1+x_1,Z_2+x_2)  \right] \nonumber \\
&=\log( 2 \pi)- \log \left( \bbE \left[  \rme^{ -\frac{  \|\bfX^\star\|^2 }{2}}  \right ]  \right)  - h(\bfZ) \notag\\
&\quad + 1 +\frac{r_m^2 }{2}+ \frac{x_1^2}{2} \left(1-\frac{r_m^2}{r_p^2} \right) \notag\\
& \quad -  \bbE \left[   K_{\bar{\bfX}} \left(Z_1+x_1,Z_2+r_m \sqrt{1-\frac{x_1^2}{r_p^2}} \right) \right] \nonumber \\
&=  -f(x_1; P_{\bfX^\star})-a+\sfC(r_p,r_m) \label{eq:using_parametrization_of_ellipse}
\end{align} 
where in step $(a)$ we have used the identity in \eqref{eq:Representation_of_f_Y};  step $(b)$ follows because $\bfY| \bfX=\bfx$ is Gaussian with mean $\bfx$; and step $(c)$ follows because $\cE^{+}$ can be described only in terms of $x_1$:
\begin{align*}
 \cE^{+}&\equiv \left \{ (x_1,x_2):   x_2=r_m \sqrt{1-\frac{x_1^2}{r_p^2}} , \,  x_1 \in [-r_p,r_p]  \right \} .
\end{align*} 

Combining \eqref{eq:using_parametrization_of_ellipse} with the KKT conditions in \eqref{eq:new_kkt_inequality}-\eqref{eq:new_KKT_equality} and defining $a=\sfC(r_p,r_m)-\log( 2 \pi) +\log \left( \bbE \left[  \rme^{ -\frac{  \|\bfX^\star\|^2 }{2}}  \right ]  \right)  + h(\bfZ)- 1-\frac{r_m^2}{2}$ concludes the proof.  

\section{Proof of Lemma~\ref{lem:complex_ext_of_cumulant}}
\label{app:lem:complex_ext_of_cumulant}

Using \eqref{eq:using_parametrization_of_ellipse}, we can write 
\begin{align}
&\kl{ P_{\bfY|\bfX}(\cdot | \bfx) }{  P_{\bfY^\star} } \notag\\
&=c+  \frac{2+\|\bfx\|^2}{2}-  \bbE \left[   K_{\bar{\bfX}}(Z_1+x_1,Z_2+x_2)  \right] 
\end{align}
for some constant $c$ independent of $\bfx$. 
Therefore, $\bfx \mapsto  \bbE \left[   K_{\bar{\bfX}}(Z_1+x_1,Z_2+x_2)  \right] $ has a complex extension which is an entire  function if and only if $\bfx \mapsto \kl{ P_{\bfY|\bfX}(\cdot | \bfx) }{  P_{\bfY^\star} }$ has complex extension which is  an entire function.  The fact that $\bfx \mapsto \kl{ P_{\bfY|\bfX}(\cdot | \bfx) }{  P_{\bfY^\star} }$ has a complex analytic extension to $\mathbb{C}^2$ has been established in several works \cite{chan2005capacity}.  

Next, observe that $\tilde{f}(z; P_{\bfX^\star})$ is composition of the complex extensions  $\bfz \mapsto \bbE \left[   K_{\bar{\bfX}}(Z_1+z_1,Z_2+z_2)  \right] $ and $z \mapsto \left (z,\sqrt{  \left| 1-\frac{z^2}{r_p^2} \right| }  \rme^{ \frac{i  \, {\rm arg} \left(1-\frac{z^2}{r_p^2} \right)}{2}} \right)$. Note that the former function is entire and the  latter function is analytic on $z \in  \mathbb{C} \setminus  \{ z:   |{\rm Re}(z)| \ge r_p, {\rm Im}(z)=0 \}$. Therefore,  the proof that $\tilde{f}(z; P_{\bfX^\star})$  is analytic on  $z \in  \mathbb{C} \setminus  \{ z:   |{\rm Re}(z)| \ge r_p, {\rm Im}(z)=0 \}$ follows because a composition of analytic functions is analytic \cite{bak2010complex}.

\section{Proof of Lemma~\ref{lem:Bound_on_abs_of_f}}
\label{app:lem:Bound_on_abs_of_f}
The proof of the bound proceeds as follows: 
\begin{align}
&\left| \tilde{f}(i u; P_{\bfX^\star}) \right|
  \le   \bbE \left[   \left|   K_{\bar{\bfX}} \left(Z_1+iu,Z_2+r_m \sqrt{ 1+\frac{u^2}{r_p^2} } \right)  \right|   \right]  \nonumber\\
  &\overset{(a)}\le   \bbE \left[   \log    \left| \bbE \left[   \rme^{  (Z_1+iu) X_1^\star +  (Z_2+r_m \sqrt{ 1+\frac{u^2}{r_p^2} }) X_2^\star   }\right]\right|    \right] +2\pi  \nonumber\\
    &\overset{(b)}\le   \bbE \left[   \log    \bbE \left[   \left|  \rme^{  (Z_1+iu) X_1^\star +  (Z_2+r_m \sqrt{ 1+\frac{u^2}{r_p^2} }) X_2^\star   }  \right|  \right]   \right] +2\pi \nonumber\\
    &=   \bbE \left[   \log    \bbE \left[     \rme^{  Z_1 X_1^\star +  (Z_2+r_m \sqrt{ 1+\frac{u^2}{r_p^2} }) X_2^\star   }  \right]   \right] +2\pi \nonumber \\
    &\overset{(c)}\le    \log    \bbE_{Z_1,Z_2} \left[  \bbE_{\bfX^\star} \left[     \rme^{  Z_1 X_1^\star +  (Z_2+r_m \sqrt{ 1+\frac{u^2}{r_p^2} }) X_2^\star   }  \right]   \right] +2\pi \nonumber \\
    &\overset{(d)}=   \log    \bbE_{\bfX^\star}  \left[      \rme^{  \frac{ (X_1^\star)^2}{2} +  \frac{ (X_2^\star)^2}{2} +r_m \sqrt{ 1+\frac{u^2}{r_p^2} } X_2^\star   }    \right] +2\pi \nonumber\\
    &\overset{(e)}\le  \log \rme^{  \frac{ r_p^2}{2} +  \frac{ r_m^2}{2} +r_m \sqrt{ 1+\frac{u^2}{r_p^2} } r_m   }     +2\pi \nonumber \\
    &=  \frac{ r_p^2}{2} +  \frac{ r_m^2}{2} +r_m^2 \sqrt{ 1+\frac{u^2}{r_p^2} }  +2\pi
\end{align}
where step $(a)$ follows by $| \log(z) | = | \log |z|  +i {\rm arg}(z) |  \le \log(|z|) +2\pi$; step $(b)$ follows from the triangle inequality;  step $(c)$ follows from using Jensen's inequality; step $(d)$ follows because the moment generation function of a standard Gaussian $Z$ is $\bbE[\rme^{tZ}]=\rme^{\frac{t^2}{2}}$; and step $(e)$ follows by $|X_1^\star| \le r_p$ and  $|X_2^\star| \le r_m$. 
\end{appendices}

\end{document}